\newtheorem{thm}{Theorem}                % ������ �����������½ں���ǰ׺��������[section]
\newtheorem{proposition}[thm]{Proposition}  % ���⣬ �����������½ں���ǰ׺��������[section]
\newtheorem{Remark}{Remark} %  [section]  % ���ۣ������������½ں���ǰ׺��������[section]
\renewcommand{\figurename}{Fig.}            % ��Figure  ת��Ϊ  Fig.
\journal{Applied Mathematics and Computation}
\begin{document}

\begin{frontmatter}

%% Title, authors and addresses

%% use the tnoteref command within \title for footnotes;
%% use the tnotetext command for the associated footnote;
%% use the fnref command within \author or \address for footnotes;
%% use the fntext command for the associated footnote;
%% use the corref command within \author for corresponding author footnotes;
%% use the cortext command for the associated footnote;
%% use the ead command for the email address,
%% and the form \ead[url] for the home page:
%%
%% \title{Title\tnoteref{label1}}
%% \tnotetext[label1]{}
%% \author{Name\corref{cor1}\fnref{label2}}
%% \ead{email address}
%% \ead[url]{home page}
%% \fntext[label2]{}
%% \cortext[cor1]{}
%% \address{Address\fnref{label3}}
%% \fntext[label3]{}

\title{  A well-balanced finite difference WENO scheme \\for shallow water flow model }

%\tnotetext[label1]{The research of the first author is supported
%by the National Natural Science Foundation of  China (No. 41171183, 11201254) and the Project for Scientific Plan of Higher Education in Shandong Providence of China %(No. J12LI08).  The second author is sponsored by MIUR, Prin2009, Metodi numerici innovativi per problemi iperbolici con applicazioni in fluidodinamica, teoria cinetica %e biologia computazionale. }

\author[label2]{Gang Li\corref{cor1}}
\ead{gangli1978@163.com}%
\author[label3]{Valerio Caleffi}
\ead{valerio.caleffi@unife.it}
\author[label4]{Zhengkun Qi}
\ead{qzkun@qdu.edu.cn}

\address[label2]{School of Mathematical Sciences, Qingdao University, Qingdao,
Shandong 266071, P.R. China.}

\address[label3]{Dipartimento di Ingegneria, Universit$\acute{a}$ degli Studi di Ferrara, Via G. Saragat, 1, 44122 Ferrara, Italy.}

\address[label4]{International College, Qingdao University, Qingdao,
Shandong 266071, P.R. China.}

\cortext[cor1]{Corresponding author. Tel.: +86 0532 85953660. Fax: +86 0532 85953660.}

\begin{abstract}
In this paper, we are concerned with shallow water flow model over non-flat bottom topography by high-order schemes. Most of the numerical schemes in the literature are developed from the original mathematical model of the shallow water flow. The novel contribution of this study consists in designing an finite difference weighted essentially non-oscillatory (WENO) scheme based on the alternative formulation of the shallow water
flow model, denoted as ``pre-balanced" shallow water equations and introduced
in \{Journal of Computational Physics 192 (2003) 422-451\}. This formulation greatly simplifies
the achievement of the well-balancing of the present scheme. Rigorous numerical analysis as well as extensive numerical results all verify that the current scheme preserves the exact conservation property. It is important to note that this resulting scheme also maintains the non-oscillatory property near discontinuities and keep high-order accuracy for smooth solutions at the same time.
\end{abstract}

\begin{keyword}
Shallow water flow model; finite difference WENO scheme; Source term; Exact conservation property; High-order accuracy
\end{keyword}

\end{frontmatter}

%%
%% Start line numbering here if you want
%%

%\linenumbers

%% main text

\section{Introduction}   \label{introduction}
\setcounter{equation}{0}
\setcounter{figure}{0}
\setcounter{table}{0}

In this paper, we are interested in numerical simulation for the shallow water flow model by high order finite difference schemes. The governing equations, referred to as the shallow water equations, have a wide applications in hydraulic and coastal engineering \cite{Bradford2002,Gottardi2004}. Some effects, such as friction on the bottom topography, wind forces, as well as variations of the channel width, can result in additional source terms to the governing equations. Herein, we only consider the geometrical source term due to the non-flat bottom topography. The one-dimensional case has the following form
\begin{equation}  \label{1D-SWE-original}
  \left\{
  \begin{array}{l}
  h_t + (hu)_x = 0,\\
  (hu)_t + \left(hu^2 + \frac{1}{2}gh^2\right)_x = -ghb_x,
  \end{array}
  \right.
\end{equation}
where $h$ and $u$ are the water depth and the depth-averaged water velocity, respectively, $g$ is the gravitational constant, and $b$ stands for the bottom topography. Therefore, $H=h+b$ denotes the water surface level and $hu$ represents the water discharge.

Shallow water equations with source terms are also called as \textit{balance laws}. Balance laws often admit steady state solution in which the source term is exactly balanced by the non-zero flux gradient. Thus it is desirable to maintain the balance at the discrete level, but such balance are usually neither a constant nor a polynomial function. For most numerical schemes, the truncation error between the flux gradient and the source term is not exactly zero for the above balance. So the numerical schemes for the shallow water equations with the source term is a challenging task with the presence of the source term. There are a lot of efforts focused on this subject in the literature, see e.g., \cite{Vazquez-Cendon, Jin, Kurganov}. Bermudez and Vazquez \cite{Bermudez}
firstly proposed the idea of ``exact conservation property'' (exact
C-property), which means that a scheme is exactly compatible with
the still water stationary solution
\begin{equation}    \label{stationary}
h+b = \mathrm{constant} \;\; and  \;\; u = 0.
\end{equation}
This property is necessary for the balance between the flux gradient
and the source term and is also known as \emph{well balancing}. An efficient
scheme should satisfies this property. Such schemes are often
regarded as well-balanced schemes after the pioneering works of Greenberg et al. \cite{Greenberg1996,Greenberg1997} and the study for the well-balanced schemes is currently a very active subject of research. The important advantages of well-balanced schemes over non-well-balanced schemes is that they can accurately resolve small perturbations of such steady state solution with relatively coarse meshes \cite{Noelle2010, Xing2011-1}. However, straightforward treatments to the source term can not preserve the exact C-property and even leads to unacceptable numerical results such as spurious numerical
oscillations. To construct well-balanced schemes, there are many attempts to handle the geometric source term. LeVeque \cite{LeVeque} brought forward a high-resolution Godunov-type finite volume scheme by a quasi-steady wave-propagation algorithm. Zhou {\em et al.} \cite{Zhou} designed a robust well-balanced scheme based on a Godunov-type method and a surface gradient method for the data reconstruction. By the aid of a special decomposition of the source term, Xing and Shu \cite{Xing2005} developed a high order well-balanced finite difference weighted non-oscillatory (WENO) schemes. Furthermore, Xing and Shu \cite{Xing-JSC,Xing-2006-JCP,Xing-2006-CICP} extended their idea and designed well-balanced finite volume WENO schemes and discontinuous Galerkin (DG) finite element methods for a class of balance laws equations. Audusse {\em et al.} \cite{Audusse2004, Audusse2005} designed a fast well-balanced scheme by a hydrostatic reconstruction procedure. With the help of a new quadrature formula for the source term, Noelle {\em et al.} \cite{Noelle2006} designed well-balanced finite volume WENO schemes with arbitrary order of accuracy.  Caleffi \cite{Caleffi2011} for the first time extended the Hermite WENO schemes \cite{Qiu2003} to the shallow water equations and obtained a well-balanced schemes even with the source term. More recently, Hou {\em et al.} \cite{Hou2013} proposed a robust well-balanced well-balanced cell-centered finite volume method on unstructured grids.
% More recently, Zhou {\em et al.} \cite{Zhou2013} proposed a well-balanced stable generalized Riemann problem scheme based on adaptive moving meshes.
More information about well-balanced schemes can be found in the lecture note \cite{Noelle2010}.

Most of the numerical schemes reviewed above are developed considering as a
starting mathematical model the shallow water equations in their classical form.
An alternative formulation of the shallow
flow model, denoted as ``pre-balanced'' shallow water equations, was introduced by Roges {\em et al.} in \cite{Roges} to the aim of simplify the achievement of the well-balancing. The new formulation was obtained assuming as dependent variable the water elevation instead of the water depth and by a simple analytical manipulation. This formulation was adopted to develop different numerical schemes. For example in \cite{Canestrelli2009} the pre-balanced shallow water equations are integrated combining the PRICE-C method with a path-conservative method while in \cite{Li-2014} the same equations are integrated by a well-balanced central WENO scheme.

The original contribution of this research consists in an upwind finite difference
scheme based on the pre-balanced shallow water flow model \cite{Roges}. In analogy to \cite{Roges}, this new formulation greatly simplifies the achievement of exact C-property of
the present scheme. Rigorous numerical analysis as well as extensive benchmark
examples all verify the satisfaction of the exact C-property of the resulting scheme.
In addition, the high-order accuracy is obviously obtained at the same time.

This paper is organized as follows: in Section \ref{WENO-review}, we give a brief review of the finite difference WENO schemes. We propose a high order well-balanced finite difference WENO scheme in Section \ref{well-balanced-WENO} based on an equivalent governing equations for the shallow water flow model. Numerical experiments of one- and two-dimensional cases are carried out in Section \ref{1D-results} and Section \ref{2D-results}, respectively. Conclusions are given in Section \ref{conclusion}.

%From the point of view of mathematics,  \cite{Roges} firstly balanced the flux gradient and the source terms and then applied Roes approximate Riemann %solver to construct well-balanced schemes.

\section{A review of finite difference WENO schemes}\label{WENO-review}
The first finite difference WENO scheme was designed in 1996 by Jiang and Shu \cite{Jiang} for one- and two-dimensional hyperbolic conservation laws. More detailed information of WENO schemes can be found in the lecture note \cite{Shu1997}. For the latest advances regarding WENO schemes, we refer to the review \cite{Shu2009}. We begin with the description for the one-dimensional scalar conservation laws
 \begin{equation}\label{1D-laws}
 u_t+f(u)_x=0.
\end{equation}
For simplicity, we assume that the grid points $\{x_j \}$ are
uniform. We define the cell size and cells by $\Delta x = x_{j+1}-x_j$ and $I_j=[ x_{j-1/2},x_{j+1/2}]$ with $x_{j+1/2}=x_j+\Delta x/2$, respectively. A semidiscrete
conservative high order finite difference scheme of ($\ref{1D-laws}$)
can be formulated as follows
\begin{equation}\label{semi}
\frac{{\rm d} u_j(t)}{{\rm d}t}=-\frac{1}{\Delta
x}\left(\hat{f}_{j+1/2}-\hat{f}_{j-1/2}\right),
\end{equation}
where $u_j(t)$ is the numerical approximation to the point value
$u(x_j,t)$, and the numerical flux $\hat{f}_{j+1/2}$ is used to approximate
$h_{j+1/2}=h\left(x_{j+1/2}\right)$ with high order accuracy. Here $h(x)$
is implicitly defined as in \cite{Jiang}
$$
f(u(x))=\frac{1}{\Delta x}\int^{x+\Delta x/2}_{x-\Delta
x/2}h(\xi)d\xi.
$$

We take upwinding into account to maintain the numerical stability and split a general flux into two
parts either globally or locally
$$
f(u)=f^+(u)+f^-(u),
$$
where $ \displaystyle  \frac{ {\rm d} f^+(u) }{{\rm d} u} \geq 0$ and $ \displaystyle  \frac{ {\rm d} f^-(u) }{{\rm d} u} \leq 0$. With respect to $f^+(u)$ and $f^-(u)$, we can get numerical fluxes $\hat{f}^+_{j+1/2}$ and $\hat{f}^-_{j+1/2}$ using the WENO reconstruction, respectively. The computation of $\hat{f}^{+}_{j+1/2}$ and $\hat{f}^{-}_{j+1/2}$ is described in \ref{Appendix-A}. Finally, we get the numerical fluxes as follows $$\hat{f}_{j+1/2} = \hat{f}^{+}_{j+1/2} + \hat{f}^{-}_{j+1/2}.$$

With numerical fluxes $\hat{f}_{j+1/2}$ at hand, we can write the semidiscrete scheme (\ref{semi}) as an ordinary differential equation (ODE) system
$$
u_t = L(u).
$$
Eventually, we discretize this ODE system in time by the third-order total variation diminishing (TVD)
Runge-Kutta method \cite{Shu1988}
\begin{equation}  \label{rk3}
\begin{array} {lcl}
     u^{(1)} & = & u^n + \Delta t L\left(u^n\right),   \\
     u^{(2)} & = & \frac{3}{4} u^n + \frac{1}{4} u^{(1)} + \frac{1}{4} \Delta t L\left(u^{(1)}\right), \\
     u^{n+1} & = & \frac{1}{3} u^n + \frac{2}{3} u^{(2)} +   \frac{2}{3} \Delta t L\left(u^{(2)}\right).
\end{array}
\end{equation}

\section{The numerical scheme} \label{well-balanced-WENO}
In this section, we design a well-balanced finite difference WENO scheme for the shallow water flow model. For the sake of simplicity, we take the one-dimensional case as an example.

Herein, we apply the equivalent governing equations as in \cite{Roges}, where the water surface level $H$ instead of the
water depth $h$ as an unknown variable since the bottom topography b is independent of the time $t$ and the physical fluxes have been changed accordingly. Similar procedure is also used in the surface gradient method by Zhou {\em et al.} \cite{Zhou}, the centered scheme of Canestrelli {\em et al.} \cite{Canestrelli2009}, the discontinuous Galerkin finite element methods of Kesserwani and Liang \cite{Kesserwani2010}, and the central WENO scheme of Li {\em et al.} \cite{Li-2014}.

Equivalent to the original governing equations (\ref{1D-SWE-original}), we apply the following ones as in \cite{Roges},
 \begin{equation} \label{1D-SWE-modified}
\left\{
\begin{array}{l}
 H_t   + (hu)_x                     = 0,\\
(hu)_t + \left( \displaystyle \frac{   (hu)^2}{H-b} + \frac{1}{2}gH^2 - gHb   \right)_x = -gH b_x,
\end{array}
\right.
\end{equation}
which can be denoted by a compact vector form
$$U_t + f(U)_x = S,$$
where $U = (H, \; hu)^T, \; f(U) = \left(\displaystyle   hu, \; \frac{  (hu)^2}{H-b} + \frac{1}{2}gH^2 - gHb \right)^T, \; \mbox{and} \; S = (0, \; -g H b_x)^T.$

In this paper, we are intent on solving the equivalent governing equations (\ref{1D-SWE-modified}) by \emph{linear} schemes. For a given linear scheme, all the spatial derivatives are approximated by a \emph{linear} finite difference operator $D$ that satisfyies
\begin{equation} \label{linear-operator}
D(\alpha f_1 + \beta f_2)=\alpha D(f_1)+\beta D(f_2)
\end{equation}
for constants $\alpha, \; \beta$ and any grid functions $f_1, \; f_2$. For such linear schemes, we have

\begin{proposition}   \label{proposition-linear}
For the still water stationary solution (\ref{stationary}), linear schemes satisfying (\ref{linear-operator}) for the shallow water equations (\ref{1D-SWE-modified}) maintain the exact C-property.
\end{proposition}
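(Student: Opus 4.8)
The plan is to substitute the still water state (\ref{stationary}) directly into the semidiscrete scheme built on the pre-balanced system (\ref{1D-SWE-modified}) and to show that the discrete time derivative of each component of $U=(H,hu)^T$ vanishes identically. I would begin by recording the two features that the reformulation buys us at the stationary state: since $u=0$ the discharge $hu$ is the identically-zero grid function, and since $h+b$ is constant the surface level $H\equiv H_0$ is a constant grid function. These two observations are what make the argument short, and isolating them is the conceptual core of the proof.

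For the continuity equation the spatial term is $D(hu)$. Because $hu\equiv 0$, linearity (\ref{linear-operator}) gives $D(hu)=D(0)=0$, so $\mathrm{d}H_j/\mathrm{d}t=0$ and $H$ cannot move off the constant state. For the momentum equation I would simplify the flux $f_2=\frac{(hu)^2}{H-b}+\frac12 gH^2-gHb$ term by term: the kinetic contribution $\frac{(hu)^2}{H-b}$ vanishes (using $hu=0$ and $H-b=h>0$ on a wet bed), and with $H\equiv H_0$ the flux collapses to $f_2=\frac12 gH_0^2-gH_0 b$. Applying $D$ and invoking linearity (\ref{linear-operator}) splits this into $\frac12 gH_0^2\,D(1)-gH_0\,D(b)$; using the consistency property $D(1)=0$ (a constant flux produces identical interface values $\hat f_{j\pm 1/2}$, so its discrete derivative is zero) then leaves $D(f_2)_j=-gH_0\,D(b)_j$.

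The remaining step is to match this against the discretized source. Writing $-gHb_x$ with the same linear operator acting on $b$, that is $S_j=-gH_j\,D(b)_j$, and using $H_j\equiv H_0$, gives $S_j=-gH_0\,D(b)_j$. Hence $\mathrm{d}(hu)_j/\mathrm{d}t=-D(f_2)_j+S_j=gH_0\,D(b)_j-gH_0\,D(b)_j=0$. Both components therefore have vanishing discrete time derivative, which is the semidiscrete exact C-property. Since $L(U)=0$ at the stationary state, each stage of the Runge--Kutta integrator (\ref{rk3}) reproduces $U^n$, so the fully discrete scheme preserves (\ref{stationary}) as well.

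I expect the main obstacle to be precisely the term-by-term bookkeeping of the momentum flux, and in particular the cancellation between the surviving flux contribution $-gH_0\,D(b)$ and the source $-gH_0\,D(b)$. This is exactly where the pre-balanced formulation pays off: at the stationary state the only $b$-dependence left in the flux is \emph{linear} in $b$, so linearity of $D$ forces the cancellation, whereas in the original system (\ref{1D-SWE-original}) the flux term $\frac12 g h^2=\frac12 g(H_0-b)^2$ is a genuinely nonlinear function of $b$ whose discrete derivative need not equal the discrete source. Getting the two discretizations of the $b$-derivative to be literally the same operator $D$ is the hypothesis that does all the work, and I would be careful to state it explicitly.
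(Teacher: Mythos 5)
Your proof is correct and follows essentially the same route as the paper: both reduce the momentum balance at the still-water state to the cancellation of $gH_0\,D(b)$ between flux and source using linearity of $D$, constancy of $H$, $u=0$, and consistency of $D$ on constants. The only cosmetic difference is that the paper absorbs $gH\,D(b)$ into $D(gHb)$ and cancels inside the operator, whereas you expand $D(f_2)$ and cancel against the source outside; your added remark on the Runge--Kutta stages is a harmless extra.
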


\begin{proof}
 For still water stationary solution (\ref{stationary}), any consistent linear schemes satisfying (\ref{linear-operator}) are exact for the first equation
$(hu)_x = 0$ due to $u=0$. For the second one, with a given linear finite difference operator $D$, the error between the flux gradient and the source term reduces to
$$
\begin{array} {l}
D \left( \displaystyle   \frac{(hu)^2}{H-b} + \frac{1}{2}gH^2 - gHb    \right) + g H D(b) \\
= D\left(\frac{1}{2} gH^2 - gHb  + g H b \right)   \\
= D\left(\frac{1}{2}g H^2\right)  \\
\equiv 0,
\end{array}
$$
here the first equality is due to the linearity of $D$, $u=0$ and $H = h + b = constant$; the second
equality is just a simple algebra operation inside the parenthesis, and the last one is again due to
the fact that $H = h + b = constant$ and the consistency of the operator $D$. This completes the proof.
\end{proof}

Unfortunately, the finite difference WENO schemes described in Section \ref{WENO-review} are nonlinear. The
nonlinearity comes from the nonlinear weights, which in turn come
from the nonlinearity of the smoothness indicators.
In order to construct a linear scheme
which can maintain the exact C-property even with the presence of the nonlinearity of the
nonlinear weight, we adopt the following procedures. The resulting scheme maintains the exact C-property and the accuracy is not affected.

Firstly, we compute the numerical flux $\hat{f}_{j+1/2}$ to approximate the flux gradient $f(U)_x$. For similarity, we consider a finite difference WENO scheme with a global Lax-Friedrichs flux splitting, denoted by WENO-LF scheme. Now the physical flux $f(U)$ is written out as
$$f(U) = f^+(U) + f^-(U),$$
where
\begin{equation}\label{flux-splitting}
f^{\pm}(U)=\frac{1}{2}\left[ \left(\begin{array}{c}
hu \\
\displaystyle  \frac{(hu)^2}{H-b} + \frac{1}{2}gH^2 - gHb
\end{array}
\right) \pm \alpha _i\left(
\begin{array}{c}
H\\
hu
\end{array}
\right) \right],
\end{equation}
with
\begin{equation}\label{alpha}
\alpha_i = \max\limits_u  \big| \lambda_i(u) \big|,
\end{equation}
here $\lambda _i(u)$ being the $i$th eigenvalue of the Jacobian matrix $f'(U)$.

Moreover, in order to achieve better
numerical results at the price of more complicated computations, the WENO
reconstruction is always accompanied by a local characteristic
decomposition procedure \cite{Shu1997}, which is more robust than a component-wise version.

Subsequently, we will verify that for the still water stationary solution (\ref{stationary}), the present WENO scheme is a linear scheme. We refer to \ref{Appendix-B} for the complete verification of the linearity of the present scheme. Consequently, we have the following result according to the Proposition \ref{proposition-linear}.

\begin{proposition}
For the still water stationary solution (\ref{stationary}), the WENO-LF scheme as stated above for the shallow water equations (\ref{1D-SWE-modified}) maintains the exact C-property and their original accuracy.
\end{proposition}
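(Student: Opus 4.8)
The plan is to deduce this proposition from Proposition~\ref{proposition-linear}: since any consistent linear operator satisfying (\ref{linear-operator}) already preserves the exact C-property for (\ref{1D-SWE-modified}), it suffices to show that, \emph{when evaluated on the still water data} (\ref{stationary}), the nonlinear WENO-LF discretization collapses onto such a linear operator. First I would record the structural simplifications forced by (\ref{stationary}): because $u=0$ one has $hu\equiv 0$ at every node, and because $H=h+b$ is constant the state is $U_k=(H,0)^{T}$ throughout. At each interface the frozen Jacobian $f'(U)$ is the symmetric matrix with eigenvalues $\pm\sqrt{g\bar h}$, so the projected characteristic variables $w_{i,k}=l_i\cdot U_k$ are constant in $k$, while the projected physical fluxes obey the antisymmetry $g_{2,k}=-g_{1,k}$.

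Applying the Lax--Friedrichs splitting (\ref{flux-splitting})--(\ref{alpha}) in characteristic form gives $g^{\pm}_{i,k}=\tfrac12\bigl(g_{i,k}\pm\alpha_i w_{i,k}\bigr)$; since the $w_{i,k}$ are constant, the split fluxes of the two fields differ only by a sign and an additive constant, $g^{\pm}_{2,k}=-g^{\pm}_{1,k}+\mathrm{const}$. The decisive observation is that the WENO smoothness indicators are invariant under $v\mapsto -v+\mathrm{const}$, so the two characteristic fields generate \emph{identical} nonlinear weights. Freezing these common weights turns each biased reconstruction into a genuine linear combination of nodal values, and the antisymmetry makes the field contributions cancel pairwise. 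Carrying this through the back-projection $R$, whose first row has equal entries $(1,1)$, the first flux component reduces to $\hat f_{1,j+1/2}=(\hat g^{+}_{1}+\hat g^{+}_{2})_{j+1/2}+(\hat g^{-}_{1}+\hat g^{-}_{2})_{j+1/2}$, in which each field-sum collapses to a constant; the two constants are opposite, so $\hat f_{1,j+1/2}\equiv 0$ at every interface, the discrete $(hu)_x$ vanishes, and $H$ is preserved.

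The remaining, and in my view hardest, step is the momentum balance: one must show that the discrete gradient of the second flux component is exactly cancelled by the discretized source $-gHb_x$. My plan is to discretize the source through the \emph{same} reconstruction, so that both the momentum flux and the source are acted on by one and the same frozen-weight operator $D$. Because on (\ref{stationary}) the momentum flux $\tfrac{(hu)^2}{H-b}+\tfrac12 gH^2-gHb$ reduces to the affine function $\tfrac12 gH^2-gHb$ of $b$, this operator $D$ satisfies (\ref{linear-operator}), and Proposition~\ref{proposition-linear} then applies verbatim to give
$$D\Bigl(\tfrac12 gH^2-gHb+gHb\Bigr)=D\Bigl(\tfrac12 gH^2\Bigr)\equiv 0.$$
The technical heart is therefore to verify that a single consistent linear operator simultaneously governs the momentum flux and the source on the stationary state --- equivalently, that the two characteristic fields share their nonlinear weights and that those weights may be reused for the source term --- which is exactly the linearity check I would carry out in \ref{Appendix-B}. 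Finally, the accuracy assertion requires no separate argument: away from the steady state the scheme is the standard high-order WENO-LF discretization, and since well-balancing is obtained here through the pre-balanced formulation (\ref{1D-SWE-modified}) rather than through any accuracy-reducing modification of the source, the formal order of the reconstruction is retained wherever the solution is smooth.
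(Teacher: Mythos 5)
Your proposal is correct and follows essentially the same route as the paper: on the still-water state the nonlinear weights freeze into fixed coefficients, the $\pm\alpha U$ dissipation contributions cancel because $U=(H,0)^T$ is constant at every node, the resulting frozen linear operator is reused to discretize the (halved) source term, and Proposition~\ref{proposition-linear} then delivers the exact C-property, with the accuracy claim following because the frozen operator is still the high-order consistent WENO derivative approximation. The only real difference is presentational: the paper's \ref{Appendix-B} performs the linearization with abstract $2\times2$ coefficient matrices and the consistency identity $\sum_k c_k = I$, whereas you make the same cancellation explicit in characteristic variables through the antisymmetry $g_{2,k}=-g_{1,k}$ and the observation that the two fields share smoothness indicators (and hence weights) because the indicators are invariant under $v\mapsto -v+\mathrm{const}$.
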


%\begin{rk} In the case of the flat bottom topography (i.e., $b=0$), the pre-balanced shallow water equations %(\ref{1D-SWE-modified}) are free of source term. Consequently, this well-balanced finite difference WENO scheme %reduces to the original finite difference WENO scheme from
%\cite{Jiang, Shu1997}.
%\end{rk}

\section{Numerical results} \label{results}
In this Section, we carry out extensive one- and two-dimensional numerical experiments to demonstrate the
performances of a fifth-order ($r=2$) finite difference WENO scheme. In all the numerical examples, time discretization is by the classical third-order Runge-Kutta method \cite{Shu1988}. The $CFL$ number is taken as $0.6$, except for the accuracy tests where smaller time step is taken to ensure that spatial errors dominate. The gravitation constant g is taken as $9.812 \,  \mbox{m}/\mbox{s}^2$.

\subsection{One-dimensional cases} \label{1D-results}
Firstly, we present numerical results of our fifth-order finite difference WENO-LF scheme for the one-dimensional model (\ref{1D-SWE-modified}).

\subsubsection{Testing the exact C-property}

 We verify the exact C-property of the resulting scheme by the following test cases in \cite{Xing2005} over two different bottom topographies on a computational domain
$[0,10]$. The first bottom topography is smooth
$$
\displaystyle b(x) = 5e^{-\frac{2}{5}(x-5)^2},
$$
and the second one is discontinuous
$$
b(x)= \left\{
\begin{array}{ll}
4 & \mathrm{if} \; 4 \leq x \leq  8, \\
0 & \mathrm{otherwise}.
\end{array}
\right.
$$
The initial data is the still water stationary solutions
$$h + b = 10  \; \;  \mbox{and} \; \;  u=0. $$
 We compute the solution up to $t = 0.5$ s on a mesh with $200$ uniform cells. In order to show
that the exact C-property is maintained even with round off error,
we apply single, double and quadruple precisions, respectively, to carry out the
computation. We present the $L^1$ and $L^{\infty}$ error for $h$ and
$hu$ in Tables \ref{t:smooth} and \ref{t:discontinuous} for the two different bottom
topographies. We can clearly observe that the $L^1$ and $L^{\infty}$
errors are all at the level of round off error for different precisions, and
verify the expected exact C-property accordingly.

 \subsubsection{  Testing the orders of accuracy}  \label{order}

In this test case, we test the fifth-order accuracy of the
scheme for the smooth solution. We apply the following bottom
topography and the initial condition
$$
b(x) = \sin^2(\pi x), \; h(x,0) = 5+e^{\cos(2\pi x)}, \; (hu)(x,0) = \sin(\cos(2\pi x)), \; x \in [0,1]
$$
with periodic boundary conditions. We compute the test case up to $t=0.1$ s and apply the
same fifth-order WENO scheme with $6400$ cells to obtain a
reference solution. In Table \ref{t:order-1D}, we
list the $L^1$ errors and orders of accuracy for $h$ and $hu$.
It is clear that we get the expected fifth-order accuracy for this
test case. Due to the space limitation, we do not present the $L^{\infty}$ errors and the orders of accuracy, since they are similar with the $L^1$ errors and the orders of accuracy.

\subsubsection{A small perturbation of a steady state water flow}

 The following quasi-stationary test case was proposed by LeVeque
\cite{LeVeque}. It is chosen to demonstrate the capability of the present scheme for the computation on a rapidly
varying flow over a smooth bottom topography and the perturbation of a
stationary state flow. The bottom topography consists of a bump
$$b(x)=
\left\{\begin{array}{ll}
0.25\left(\cos(10\pi(x-1.5))+1\right)  & \mbox{if} \; 1.4  \leq  x \leq  1.6,\\
0                           & \mathrm{otherwise},
\end{array}\right.
$$
and the initial condition is given as
$$ h(x,0)=\left\{\begin{array}{ll}
1-b(x)+\epsilon  & \mbox{if} \; 1.1  \leq  x  \leq  1.2, \\
1-b(x)           & \mathrm{otherwise},\end{array}\right. \; \mathrm{and} \;
u(x,0) = 0.
$$
where $\epsilon$ is a non-zero perturbation constant. Two
cases are considered: $\epsilon=0.2$ m (big pulse) and $\epsilon=0.001$ m (small pulse).

We present the water surface level $h+b$ and the water discharge $hu$ at
$t = 0.2$ s against reference solutions in Figs. \ref{f:big-5} and \ref{f:small-5} for the big pulse and the small
pulse cases, respectively. The numerical results are resolved accurately, free of spurious numerical oscillations, and look very comparable to those found in the other existing literature.

 \subsubsection{Steady flow over a hump}

In this example, we employ three established benchmark test cases \cite{Vazquez-Cendon} with different boundary conditions. These test cases involve transcritical, supercritical and subcritical flows, respectively, in a $25$ m channel over a bump
$$b(x)=
\left\{
\begin{array}{ll}
0.2 - 0.05(x-10)^2       & \mathrm{if} \;  8 \leq x \leq  12,\\
0    & \mathrm{otherwise}.
\end{array}
\right.
$$
The initial data are defined by
$$h(x,0) = 0.33  \;\;  \mbox{and} \; \; u(x,0) = 0.$$

We employ same computational parameters for the following three cases: uniform mesh with $200$ cells, final time $t = 200 $ s. Exact solutions for the three cases can be found in \cite{Goutal1997}.

 Case 1:  Transcritical flow without a shock

A unit discharge of $1.53$ $\mbox{m}^2$/s is imposed at the upstream boundary,
and the open boundary conditions ($du/dx = 0$) are applied
at the downstream one. We present the water surface level $h + b$ and
 the water discharge $hu$ in \figurename \, \ref{f:bump-a}. It is obvious that the numerical solutions are very good agreement with the exact ones.

Case 2:  Transcritical flow with a shock

A unit discharge of $0.18$ $\mbox{m}^2/\mbox{s}$  is imposed on the upstream
boundary and a depth of $0.33 $ m  is imposed on the downstream
boundary. We show the water surface level $h + b$ and the water discharge $hu$
 against exact solutions in \figurename \, \ref{f:bump-b}. The numerical results are free of spurious oscillations, which verifies the essentially non-oscillatory property of the current scheme.

Case 3: Subcritical flow

A unit discharge of $4.42$ $\mbox{m}^2/\mbox{s}$ is imposed on the upstream boundary and a depth of
$2$ m is imposed on the downstream boundary. The numerical results are compared with exact
solutions in \figurename \, \ref{f:bump-c}, and very good agreement
is achieved.

\subsubsection{ The dam break problem over a rectangular bump}

 This test case was used in \cite{Vukovic}. Herein, we simulate a dam break problem over a rectangular bump, which involves a rapidly varying water flow over a discontinuous bottom topography. The bottom topography contains a rectangular bump:
$$
b(x)=
\left\{\begin{array}{ll}
8  & \mbox{if} \; \big|x-750\big| \leq  1500/8,\\
0  & \mathrm{otherwise}.
\end{array}\right. $$
The initial conditions are given as follows
$$ h(x,0)
=\left\{\begin{array}{ll}
20 - b(x)   & \mbox{if} \; x \leq  750, \\
15 - b(x)   & \mathrm{otherwise},
\end{array}
\right.
\;
\mathrm{and}
\;\;
u(x,0) = 0.
$$

We present numerical results against reference solutions in
\figurename \, \ref{rectangular-bump}, which indicate that the numerical results keep the essentially non-oscillatory property and are in good agreement with the reference solutions.

\subsubsection{ The tidal wave flow}

This example was used in \cite{Bermudez1994}, in which almost exact solutions (a very good asymptotically derived
approximation) were given. The bottom topography is defined as
$$b(x) = 10 + \frac{40 x}{L} + 10 \sin\left( \pi \left(\frac{40 x}{L} - \frac{1}{2}\right)\right),$$
with $L=14,000$ m being the channel length. Herein, we take the following initial conditions
$$h(x,0) = 60.5 - b(x),  \; \;  hu(x,0) = 0,$$
and boundary conditions
$$h(0,t) = 64.5 - 4 \sin\left(\pi \left( \frac{4 t}{86,400} + \frac{1}{2}\right)\right), \; \;  hu(L, t) = 0.$$
By means of the asymptotic analysis in \cite{Bermudez1994}, we can obtain the following almost exact solutions
$$h(x,t) = 64.5 - b(x) - 4 \sin\left(\pi \left( \frac{4 t}{86,400} + \frac{1}{2}\right)\right), $$
$$hu(x,t) = \frac{(x-L)\pi}{5400} \cos\left(\pi \left( \frac{4 t}{86,400} + \frac{1}{2}\right)\right).$$

We compute the example on a mesh with $200$  uniform cells up to $t = 7552.13$ s and present numerical results against exact solutions in
\figurename \, \ref{tidal-flow}, which strongly suggests that the numerical results are in good agreement with the exact ones.

\subsubsection{  $1$-rarefaction and $2$-shock problem}  \label{1-rarefaction-2-shock}

Then we consider a example over a step bottom topography \cite{Alcrudo} to further test our scheme. The bottom topography consists of a step
$$b(x)=
\left\{
\begin{array}{ll}
0    & \mathrm{if} \;    x \leq  0,\\
1    & \mathrm{otherwise},
\end{array}
\right.
$$
on a computational domain $[-10,10]$, and the initial data are as follows
$$ h(x,0)=\left\{\begin{array}{ll}
4  & \mbox{if} \; x \leq  0,\\
1  & \mbox{otherwise},
\end{array}
\right.
\;  \;
u(x,0)=0. $$

This test case produces a $1$-rarefaction spreading to the left and a
$2$-shock traveling to the right. We illustrate the water surface level $h+b$ and the water discharge $hu$ at $t = 1$ s
against exact solutions in \figurename \, \ref{f:1D-step-RS}. We can clearly observe that the numerical results keep a sharp discontinuity transition.

\subsubsection{ $1$-shock and $2$-shock problem}

This test case is also over the same step bottom topography as in Section \ref{1-rarefaction-2-shock} on a computational domain $[-10,10]$. The initial data are given by
$$ h(x,0)=
\left\{
\begin{array}{ll}
4  & \mbox{if} \; x \leq  0,\\
1  & \mathrm{otherwise},
\end{array}\right.
\; \mathrm{and} \;\;
u(x,0)=\left\{
\begin{array}{ll}
5     & \mathrm{if} \; x \leq  0,\\
-0.9  & \mathrm{otherwise}.
\end{array}
\right. $$

This test case produces two shocks: the first one moving to the left
and the second one to the right. We present the water surface level $h+b$
and the water discharge $hu$ at $t = 1$ s against exact solutions in \figurename \, \ref{f:1D-step-SS}. It is evident that the numerical results possess a good resolution and are almost free of spurious numerical oscillations.

\begin{Remark}
 As can be seen in Figs. \ref{f:1D-step-RS} and \ref{f:1D-step-SS}, there are some minor numerical oscillations for the water discharge. The occurrence is mainly due to the non-flat bottom topography which can not be handled since the imbalance between the flux gradient and the source term as well as the moving water flow \cite{Benkhaldoun2007}. As the correct capturing of the water discharge is more difficult than the water surface level, so we are satisfied with the present numerical results. Although the current numerical scheme is well-balanced for the still water stationary solution, it is
not able to maintain such a desirable exact C-property for the moving steady-state problem. Therefore, from Figs. \ref{f:1D-step-RS} and \ref{f:1D-step-SS}, we can observe disturbance to discharge in those areas with abrupt change of the bottom topography and the water depth, which is a common phenomenon also predicted by other well-balanced schemes (e.g., \cite{Zhou,Roges}). The disturbance is then advected by the flow as a wave and reaches the location as indicated in Figs. \ref{f:1D-step-RS} and \ref{f:1D-step-SS}.
\end{Remark}

\subsection{Two-dimensional cases } \label{2D-results}

Subsequently, we consider two-dimensional cases. In analogy with the one-dimensional case, the governing equations are as follows
\begin{equation}\label{2D-SWE}
\left\{
\begin{array}{l}
 H_t   + (hu)_x + (hv)_y                    = 0,\\
(hu)_t + \left( \displaystyle  \frac{(hu)^2}{H-b} + \frac{1}{2}gH^2 - gHb    \right)_x + (huv)_y = -gH b_x,\\
(hv)_t + (huv)_x + \left( \displaystyle  \frac{(hv)^2}{H-b} + \frac{1}{2}gH^2 - gHb \right)_y   = -gH b_y,
\end{array}
\right.
\end{equation}
where $v$ denotes the $y$-direction velocity, and the remaining notations are the same as in the one-dimensional case.

\subsubsection{ Testing the exact C-property }

We apply this test case to demonstrate the fact that for the two-dimensional case the present scheme indeed maintains the exact C-property over a non-flat bottom topography
$$\displaystyle
b(x,y)=0.8e^{ -50((x-0.9)^2+(y-0.5)^2))}, \; (x,y) \in [0,1] \times [0,1].
$$
The initial data are given by
$$
h(x,y,0) = 1 - b(x,y), \; \; u(x,y,0) = v(x,y,0) = 0.
$$

We compute the example up to $t = 0.1$ s on a mesh with $100 \times 100$ cells. We apply single, double and quadruple precisions, respectively, to carry out the
computation. We present the $L^1$ error for $h$, $hu$, and $hv$ in Table \ref{t:exact-c-property-2D}. We can clearly observe that the $L^1$
errors are at the level of round off error for different precisions, and
verify the expected exact C-property accordingly.

\subsubsection{ Testing the orders of accuracy} \label{order-2D}

In this example, we test the numerical orders of accuracy when the resulting scheme is applied to the
following two-dimensional problem on a square domain $[0,1] \times [0,1]$ as in \cite{Xing2005}. We adopt the following bottom topography
$$\displaystyle
b(x,y)= \sin(2\pi x) + \cos(2\pi y),
$$
and the initial data
$$
(h, hu, hv)(x,y,0) =  \left(  10 + e^{\sin(2\pi x) } \cos(2 \pi y),\;  \sin(\cos(2\pi x)) \sin(2\pi y), \; \cos(2\pi x)\cos(\sin(2\pi y))   \right),
$$
with periodic boundary conditions.

We compute this test case up to $t=0.05$ s and apply the
same fifth-order WENO scheme with $1600 \times 1600$ cells to obtain reference solutions. In Table \ref{t:order-2D}, we
list the $L^1$ errors and orders of accuracy for $h$, $hu$ and $hv$.
It is obvious that we get the expected fifth-order accuracy for this test case. Due to the space limitation, we do not present the $L^{\infty}$ errors and the orders of accuracy, since they are similar with the $L^1$ errors and the orders of accuracy.

\subsubsection{ A small perturbation of a two-dimensional steady state water flow}

We consider the test case on a rectangular domain $[0,2]\times[0,1]$. The bottom
topography contains an isolated elliptical shaped hump
$$\displaystyle
b(x,y)=0.8e^{\displaystyle-5(x-0.9)^2-50(y-0.5)^2)},
$$
the initial condition are given by
$$h(x,y,0)=\left\{
\begin{array}{ll}
1-b(x,y)+0.01  &\mathrm{if} \, 0.05 \leq  x \leq  0.15, \\
1-b(x,y)       &\mathrm{otherwise},
\end{array}
\right. \mathrm{and}\;\; u(x,y,0) = v(x,y,0)=0.
$$

For comparison, we present contours of the water surface level $h+b$ on two different meshes with
$200 \times 100$ and $600 \times 300$ uniform cells in \figurename \, \ref{f:2D-perturbation}.
\figurename \, \ref{f:2D-perturbation} displays the right-going disturbance as
it propagates past the hump. The numerical results suggest that our
schemes can resolve complex small-scale features of the water flow very well. The numerical results are comparable with those in \cite{Xing2005}.

%����
\section{Conclusions} \label{conclusion}
In this paper, we develop a well-balanced finite difference WENO scheme for the shallow water flow model based on equivalent governing equations. Rigorous numerical analysis as well as extensive numerical experiments all suggest that the present scheme maintains the exact C-property for the still water stationary solution. It is also important that the scheme obtains the expected high-order accuracy for smooth solutions, and keeps essentially non-oscillatory property near discontinuities. Based on the current governing equations, the research for the well-balanced finite volume WENO scheme are ongoing.

%��л
\section*{Acknowledgements}

The research of the first author is supported
by the National Natural Science Foundation of P.R. China (No. 11201254) and the Project for Scientific Plan of Higher Education in Shandong Providence of P.R. China (No. J12LI08).  The second author is sponsored by MIUR, Prin2009, Metodi numerici innovativi per problemi iperbolici con applicazioni in fluidodinamica, teoria cinetica e biologia computazionale. This work was partially performed at the State Key Laboratory of Science/Engineering Computing of China by virtue of the computational resources of Professor Li Yuan's group. The first author is also thankful to Professor Li Yuan for his kind invitation.
%The authors also sincerely acknowledge the constructive remarks of the referees that have led to many improvement in %this paper.

% ��¼
%% The Appendices part is started with the command \appendix;
%% appendix sections are then done as normal sections

\appendix

\section{WENO reconstruction procedure for numerical fluxes }  \label{Appendix-A}

By means of the WENO reconstruction procedure, $\hat{f}^+_{j+1/2}$ can be expressed as
\cite{Jiang}
\begin{equation} \label{weno}
\hat{f}^+_{j+1/2}=\sum^{r}_{k=0}\omega_kq^r_k\left(f^+_{j+k-r},\ldots,f^+_{j+k}\right),
\end{equation}
with $\omega_k$ being a nonlinear weight,
$f^+_i=f^+(u_i),\,i=j-r,\ldots,j+r,$ and
\begin{equation}
q^r_k\left(\mbox{g}_0,\ldots,\mbox{g}_{r}\right)=\sum^{r}_{l=0}a^r_{k,l}\mbox{g}_l
\end{equation} are the low order reconstruction to $\hat{f}^+_{j+1/2}$ on the $k$th
stencil $S_k=(x_{j+k-r},\ldots,x_{j+k}),k=0,1,\ldots,r$, and
$a^r_{k,l}, 0 \leq  k, \, l \leq  r$ are constant coefficients, see
\cite{Shu1997} for more details.

The nonlinear weight $\omega_k$ in ($\ref{weno}$) satisfies
$ \sum\limits^{r}_{k=0}\omega_k=1, $
and is designed to yield $(2r+1)$th-order accuracy in smooth regions
of the solution. In \cite{Jiang,Shu1997}, the nonlinear weight
$\omega_k$ is formulated as
\begin{equation}\label{linearweight}
\omega_k=\frac{\alpha_k}{
\sum\limits^{r}_{l=0}\alpha_l},\;\mbox{with}\;
\alpha_k=\frac{C^r_k}{\left(\varepsilon+IS_k\right)^2},\,k=0,1,\ldots,r,
\end{equation}
where $C^r_k$ is the linear weight. $IS_k$ is a smoothness indicator of $f^+(u)$ to measure the
smoothness of $f^+(u)$ on the stencil $S_k, \; k=0,1,\ldots,r$, and $\varepsilon$ is a small constant
used here to avoid the denominator becoming zero, we take
$\varepsilon=10^{-6}$ for all test cases in this paper. We employed
the smoothness indicators proposed in \cite{Jiang,Shu1997}, i.e.,
$$IS_k=\sum^{r}_{l=1}\int^{x_{j+1/2}}_{x_{j-1/2}}(\Delta
x)^{2l-1}\left(q_k^{(l)}\right)^2dx,$$
where $q^{(l)}_k$ is the
$l$th-derivative of $q_k(x)$ and $q_k(x)$ is the reconstruction
polynomial of $f^+(u)$ on stencil $S_k$ such that $$\frac1{\Delta
x} \int_{I_i} q_k(x)dx=f^+_i,\,i=j+k-r,\ldots,j+k.$$

The procedure for the reconstruction of $\hat{f}^-_{j+\frac{1}{2}}$
is a mirror symmetry to that of $\hat{f}^+_{j+1/2}$ with respect to the grid point $x_{j+1/2}$, so we will not present it here to save space.

%\subsection{ Verification of the linearity of the scheme} \label{Appendix-B}

\section{ Verification of the linearity of the scheme} \label{Appendix-B}

By virtue of the WENO reconstruction procedure, $\hat{f}^{+}_{j+1/2}$ can be written out in the following form
\begin{equation} \label{fp}
\begin{array}{lcl}
\hat{f}^{+}_{j+1/2}
&=& \sum\limits_{k=-r}^{r}c_kf^+_{j+k}\\
&=& \sum\limits_{k=-r}^{r}c_k \left[\frac{1}{2}\left(f_{j+k} + \alpha U_{j+k} \right) \right] \\
&=& \frac{1}{2}\sum\limits_{k=-r}^{r}c_kf_{j+k} + \frac{1}{2} \sum\limits_{k=-r}^{r}c_k\left(\alpha U_{j+k}\right),
\end{array}
\end{equation}
where $ f^{+}=f^{+}(U) $ as in (\ref{flux-splitting}), $c_k$ is a $2\times2$ matrix depending nonlinearly
on the smoothness indicators of $f^{+}$ on the stencil
$\{x_{j-r},\ldots,x_{j+r}\}$, and $\alpha$ is a $2\times2$ diagonal
matrix involving $\alpha_i$ in (\ref{alpha}).

Analogously, we can write $\hat{f}^{-}_{j+1/2}$ as follows
\begin{equation} \label{fm}
\begin{array}{lcl}
\hat{f}^{-}_{j+1/2}
&=& \sum\limits_{k=-r+1}^{r+1}a_kf^{-}_{j+k} \\
&=& \sum\limits_{k=-r+1}^{r+1}a_k \left[  \frac{1}{2}\left(f_{j+k} - \alpha U_{j+k} \right)\right] \\
&=& \frac{1}{2}\sum\limits_{k=-r+1}^{r+1}a_kf_{j+k} - \frac{1}{2}
\sum\limits_{k=-r+1}^{r+1}a_k\left(\alpha U_{j+k}\right),
\end{array}
\end{equation}
where $a_k$ is also a $2\times2$ matrix depending nonlinearly
on the smoothness indicators of $f^{-}$ on the stencil
$\{x_{j-r+1},\ldots,x_{j+r+1}\}$.

So we have
\begin{equation}
\hat{f}_{j+1/2}=\hat{f}_{j+1/2}^{+}+\hat{f}_{j+1/2}^{-}.
\end{equation}

Similarly, $\hat{f}_{j-1/2}^{+}$ has the following form
\begin{equation} \label{fp-}
\begin{array}{lcl}
\hat{f}_{j-1/2}^{+}
&=& \sum\limits_{k=-r-1}^{r-1}\hat{c}_kf_{j+k}^{+} \\
&=& \sum\limits_{k=-r-1}^{r-1}\hat{c}_k \left[  \frac{1}{2}\left(f_{j+k} + \alpha U_{j+k}\right)\right] \\
&=& \frac{1}{2}\sum\limits_{k=-r-1}^{r-1}\hat{c}_kf_{j+k} +
\frac{1}{2} \sum\limits_{k=-r-1}^{r-1}\hat{c}_k \left(\alpha U_{j+k}\right),
\end{array}
\end{equation}
where $\hat{c}_k$ is a $2\times2$ matrix depending nonlinearly on the
smoothness indicators of $f^{+}$ on the
stencil $\{x_{j-r-1},\ldots,x_{j+r-1}\}$.

Finally, $\hat{f}_{j-1/2}^{-}$ can be written out in the below form
\begin{equation}\label{fm-}
\begin{array}{lcl}
\hat{f}_{j-1/2}^{-} &=& \sum\limits_{k=-r}^{r} c_k f_{j-k}^{-} \\
                       &=& \sum\limits_{k=-r}^{r} c_k \left[ \frac{1}{2} \left(f_{j-k} -   \alpha U_{j-k}\right)\right]\\
                       &=& \frac{1}{2}\sum\limits_{k=-r}^{r}c_kf_{j-k} -
                       \frac{1}{2}\sum\limits_{k=-r}^{r}c_k \left(\alpha U_{j-k}\right),
\end{array}
\end{equation}
where $c_k$ is the same $2\times2$ matrix as in
(\ref{fp}).

Consequently, we can obtain
\begin{equation}\label{f-minus}
\hat{f}_{j-1/2} =\hat{f}_{j-1/2}^{+} +
\hat{f}_{j-1/2}^{-}.
\end{equation}

With the formulae in (\ref{fp}), (\ref{fm}), (\ref{fp-})
and (\ref{fm-}), the approximation to $f(U)_x$ can be eventually written out
as follows
\begin{equation}\label{approximation-f}
\begin{array}{lcl}
f\left(U\right)_x|_{x=x_j}
&\approx& \frac{1}{\Delta
x}\left(\hat{f}_{j+1/2} -
\hat{f}_{j-1/2}\right)\\
&=&\frac{1}{\Delta x}\left[ \left(
\frac{1}{2}\sum\limits_{k=-r}^{r}c_kf_{j+k} + \frac{1}{2}
\sum\limits_{k=-r}^{r}c_k(\alpha U_{j+k}) +
\frac{1}{2}\sum\limits_{k=-r+1}^{r+1}a_kf_{j+k} - \frac{1}{2}
\sum\limits_{k=-r+1}^{r+1}a_k\left(\alpha U_{j+k}\right)
 \right)\right. \\
&-& \left.\left(
\frac{1}{2}\sum\limits_{k=-r-1}^{r-1}\hat{c}_kf_{j+k} + \frac{1}{2}
\sum\limits_{k=-r-1}^{r-1}\hat{c}_k (\alpha U_{j+k}) +
\frac{1}{2}\sum\limits_{k=-r}^{r}c_kf_{j-k} -
                       \frac{1}{2}\sum\limits_{k=-r}^{r}c_k \left(\alpha U_{j-k}\right)\right)
\right]
 \\
&=& \frac{1}{2\Delta x}\left(\sum\limits_{k=-r}^{r}c_kf_{j+k} -
\sum\limits_{k=-r-1}^{r-1}\hat{c}_kf_{j+k}\right)\\
&+& \frac{1}{2\Delta x}\left(\sum\limits_{k=-r+1}^{r+1}a_kf_{j+k} -
\sum\limits_{k=-r}^{r}c_kf_{j-k}\right)\\
&+& \frac{1}{2\Delta x}\left(\sum\limits_{k=-r}^{r}c_k(\alpha
U_{j+k}) -
\sum\limits_{k=-r-1}^{r-1}\hat{c}_k(\alpha U_{j+k})\right)\\
&+& \frac{1}{2\Delta x}\left(\sum\limits_{k=-r}^{r}c_k(\alpha
U_{j-k}) - \sum\limits_{k=-r+1}^{r+1}a_k\left(\alpha U_{j+k}\right)\right) \\
&=& \mathbf{P1} +  \mathbf{P2} + \mathbf{P3} + \mathbf{P4}.
\end{array}
\end{equation}

Subsequently, we will verify that the formula $\frac{1}{\Delta
x}\left(\hat{f}_{j+1/2} - \hat{f}_{j-1/2}\right)$ for the approximation to $f\left(U\right)_x|_{x=x_j}$ is a finite difference operator. It should be noted that with
$\pm \alpha U= \pm \alpha
\left(
\begin{array}{c}
H \\
hu
\end{array}
\right)$
 in the flux splitting (\ref{flux-splitting}), this vector becomes a constant vector for the still water stationary solution (\ref{stationary}). By $U$ we denote $U_{j+k}$ with an
abuse of notation. So $\alpha U_{j+k} = \alpha U$ is also a constant
vector. Thus
\begin{equation}
\begin{array}{lcl}
\mathbf{P3} &=& \frac{1}{2\Delta x}\left(\sum\limits_{k=-r}^{r}c_k(\alpha U_{j+k})
- \sum\limits_{k=-r-1}^{r-1}\hat{c}_k(\alpha U_{j+k})\right) \\
&= &
\frac{1}{2\Delta x}\left(\sum\limits_{k=-r}^{r}c_k(\alpha U ) -
\sum\limits_{k=-r-1}^{r-1}\hat{c}_k(\alpha U )\right) \\
&= & \frac{1}{2\Delta
x}\left[\left(\sum\limits_{k=-r}^{r}c_k\right)(\alpha U ) -
\left(\sum\limits_{k=-r-1}^{r-1}\hat{c}_k\right)(\alpha U )\right]
\\
&= & \frac{1}{2\Delta x} \left[I \cdot(\alpha U) - I \cdot(\alpha U)\right] \\
&= & 0,
\end{array}
\end{equation}
where $I$ is a $2\times2$ identity matrix, the equivalents
$\sum\limits_{k=-r}^{r}c_k = I$ and
$\sum\limits_{k=-r-1}^{r-1}\hat{c}_k = I$ are due to the consistency
of the WENO reconstruction.

For the still water
stationary solution (\ref{stationary}), and with the similar
procedure as above, we can obtain that
\begin{equation}
\mathbf{P4} = \frac{1}{2\Delta x}\left(\sum\limits_{k=-r}^{r}c_k(\alpha U_{j-k}) -
\sum\limits_{k=-r+1}^{r+1}a_k(\alpha U_{j+k})\right) = 0.
\end{equation}

As a result, the approximation to $f(U)_x$ in
(\ref{approximation-f}) can be eventually written out as
\begin{equation}\label{Df}
\begin{array}{rcl}
f(U)_x|_{x=x_j}
&\approx& \frac{1}{\Delta x}(\hat{f}_{j+1/2} - \hat{f}_{j-1/2}) \\
&=& \mathbf{P1}  +  \mathbf{P2} \\
&=& \frac{1}{2\Delta x}\left(\sum\limits_{k=-r}^{r}c_k f_{j+k} -
\sum\limits_{k=-r-1}^{r-1}\hat{c}_kf_{j+k}\right)
+ \frac{1}{2\Delta x}\left(\sum\limits_{k=-r+1}^{r+1}a_kf_{j+k} -
\sum\limits_{k=-r}^{r}c_k f_{j-k}\right) \\
&=& \sum\limits_{k=-r-1}^{r+1}\beta_kf_{j+k} \\
&\triangleq& D_f(f)\big|_{x=x_j},
\end{array}
\end{equation}
where $D_f$ denotes a finite difference operator depending the flux $f(U)$ and $\beta_k$ is a $2\times2$ matrx depending on the smoothness indicators involving $f^+(U)$ and $f^-(U)$. The key idea of our scheme is to apply the operator $D_f$ in (\ref{Df}) with the fixed coefficient matrix $\beta_k$, to approximate the source term $(0, b)_x^T$. This amounts to split the source term as
\begin{equation}     \label{source-splitting}
\left(
\begin{array}{c}
0 \\
b
\end{array}
\right)_x
=
\frac{1}{2}\left(
\begin{array}{c}
0 \\
b
\end{array}
\right)_x + \frac{1}{2}\left(
\begin{array}{c}
0 \\
b
\end{array}
\right)_x,
\end{equation}
and apply the finite difference operator $D_f$ to approximate
them. Concretely speaking, one half part of the source term is approximated by
the operator $D_f$ with coefficients obtained from the computation of $f^+(U)$, and the remaining part by the operator $D_f$ with coefficients coming from the computation of $f^-(U)$.

A key observation is that the operator $D_f$ in (\ref{Df}) with the fixed coefficient matrices $\beta_k$ is a \emph{linear} finite difference operator on any grid function as in (\ref{linear-operator}). We thus conclude that for the still water stationary solution (\ref{stationary}), the present WENO scheme is a linear scheme, even with the global Lax-Friedrichs flux splitting (\ref{flux-splitting}) as well as the local characteristic
decomposition procedure.

%% References
%%
%% Following citation commands can be used in the body text:
%% Usage of \cite is as follows:
%%   \cite{key}         ==>>  [#]
%%   \cite[chap. 2]{key} ==>> [#, chap. 2]
%%

%% References with bibTeX database:

\bibliographystyle{elsarticle-num}
%\bibliography{<your-bib-database>}

%% Authors are advised to submit their bibtex database files. They are
%% requested to list a bibtex style file in the manuscript if they do
%% not want to use elsarticle-num.bst.

%% References without bibTeX database:

\bigskip

\bibliographystyle{plain}

\newpage

\begin{table}
\centering
\caption{$L^1$ errors and orders of accuracy for the test
case in Section \ref{order}.}     \label{t:order-1D}
\begin{tabular}{c*{4}{c}}\hline
&\multicolumn{2}{c}{$h$}&\multicolumn{2}{c}{$hu$} \\ \cline{2-5}
\rule{0mm}{1.1em}\raisebox{2.5ex}[0ex]{N}&
\multicolumn{1}{c}{$L^{1} \; \mbox{error}$} &
\multicolumn{1}{c}{Order} & \multicolumn{1}{c}{$L^{1} \;
\mbox{error}$} & \multicolumn{1}{c}{Order}
\\\hline
  25 &   1.7486E-02 &          &   1.1294E-01 &            \\ \hline
    50 &   2.2133E-03 &     2.98 &   1.9663E-02 &     2.52 \\ \hline
   100 &   3.3157E-04 &     2.74 &   2.8131E-03 &     2.81 \\ \hline
   200 &   2.3391E-05 &     3.83 &   2.0167E-04 &     3.80 \\ \hline
   400 &   9.4357E-07 &     4.63 &   8.1928E-06 &     4.62 \\ \hline
   800 &   2.9898E-08 &     4.98 &   2.5426E-07 &     5.01 \\ \hline
\end{tabular}
\end{table}

\begin{table}
\centering
\caption{$L^1$ and $L^{\infty}$ error for different
precisions for the still water stationary solution over a smooth
bottom topography.}  \label{t:smooth}
\begin{tabular}{l*{4}{r}}\hline
& \multicolumn{2}{c}{$L^1 \,\,
\mbox{error}$}&\multicolumn{2}{c}{$L^{\infty} \,\,\mbox{error}$} \\\cline{2-5}
\rule{0mm}{1.1em}\raisebox{2.5ex}[0ex]{Precision}&
\multicolumn{1}{c}{$h$} & \multicolumn{1}{c}{$hu$} &
  \multicolumn{1}{c}{$h$} & \multicolumn{1}{c}{$hu$}    \\\hline
Single &
1.14E-06 & 1.612E-06 & 3.81E-06 & 5.23E-06  \\
\cline{1-5}
Double & 6.14E-16 & 4.12E-15 & 1.95E-15 & 1.48E-16  \\ \cline{1-5}
Quadruple & 1.57E-33 & 2.94E-32 &
6.98E-33 & 9.12E-32  \\ \cline{1-5}
\end{tabular}
\end{table}

\begin{table}
\centering
\caption{$L^1$ and $L^{\infty}$ error for different
precisions for the still water stationary solutions over a discontinuous
bottom topography.}  \label{t:discontinuous}
\begin{tabular}{l*{4}{r}}\hline
& \multicolumn{2}{c}{$L^1 \,\,
\mbox{error}$}&\multicolumn{2}{c}{$L^{\infty} \,\,\mbox{error}$}
\\\cline{2-5}
\rule{0mm}{1.1em}\raisebox{2.5ex}[0ex]{Precision}&
\multicolumn{1}{c}{$h$} & \multicolumn{1}{c}{$hu$} &
  \multicolumn{1}{c}{$h$} &
\multicolumn{1}{c}{$hu$}    \\\hline
Single &
1.53E-06 & 3.70E-07 & 1.91E-06 & 2.53E-06  \\
\cline{1-5}
Double & 4.35E-16 & 3.62E-15 & 1.60E-16 & 1.17E-15  \\ \cline{1-5}
Quadruple & 1.43E-33 & 2.15E-32 &
4.09E-33 & 5.64E-32  \\ \cline{1-5}
\end{tabular}
\end{table}

\begin{table}
\centering
\caption{$L^1$ for different
precisions for the still water stationary solutions over a non-flat bottom topography.}  \label{t:exact-c-property-2D}
\begin{tabular}{l*{3}{r}}\hline
& \multicolumn{3}{c}{$L^1 \,\, \mbox{error}$}  \\\cline{2-4}
\rule{0mm}{1.1em}\raisebox{2.5ex}[0ex]{Precision}&
\multicolumn{1}{c}{$h$} & \multicolumn{1}{c}{$hu$} &   \multicolumn{1}{c}{$hv$}    \\\hline
Single & 5.83E-08 & 2.91E-07 & 2.93E-07  \\ \hline
Double & 1.63E-16 & 6.43E-16 & 6.45E-16    \\ \hline
Quadruple & 2.13E-34 & 4.65E-34 & 4.39E-34    \\ \hline
\end{tabular}
\end{table} 

\begin{table}
\centering
\caption{$L^1$ errors and orders of accuracy for the test case in Section \ref{order-2D}.}     \label{t:order-2D}
\begin{tabular}{cc*{6}{c}}\hline
&\multicolumn{2}{c}{$h$}&\multicolumn{2}{c}{$hu$} &\multicolumn{2}{c}{$hv$}\\ \cline{3-8}
\rule{0mm}{1.1em}\raisebox{3.ex}[0ex]{$N_x \times N_y$}&\rule{0mm}{1.1em}\raisebox{3.ex}[0ex]{CFL}&
\multicolumn{1}{c}{$L^{1} \; \mbox{error}$} &
\multicolumn{1}{c}{Order} & \multicolumn{1}{c}{$L^{1} \;
\mbox{error}$} & \multicolumn{1}{c}{Order} & \multicolumn{1}{c}{$L^{1} \;
\mbox{error}$} & \multicolumn{1}{c}{Order} \\\hline
$25 \times 25$   & 0.6 &  1.1878E-002  &       &  3.6702E-002   &       & 9.8931E-002                   \\ \hline
$50 \times 50$   & 0.6 & 1.4841E-003  & 3.00  &  4.5263E-003   & 3.02  & 1.3532E-002   &  2.87         \\ \hline
$100 \times 100$ & 0.6 & 1.1262E-004  & 3.72  &  3.5071E-004   & 3.69  & 1.0558E-003   &  3.68         \\ \hline
$200 \times 200$ & 0.4 & 4.9428E-006  & 4.51  &  1.6844E-005   & 4.38  & 4.6660E-005   &  4.50         \\ \hline
$400 \times 400$ & 0.3 & 1.7866E-007  & 4.79  &  6.6166E-007   & 4.67  & 1.6866E-006   &  4.79         \\ \hline
$800 \times 800$ & 0.2 & 6.0255E-009  & 4.89  &  2.3751E-008   & 4.80  & 5.6882E-008   &  4.89         \\ \hline
\end{tabular}
\end{table}

\begin{figure}
\centering
\includegraphics[width=2.45in]{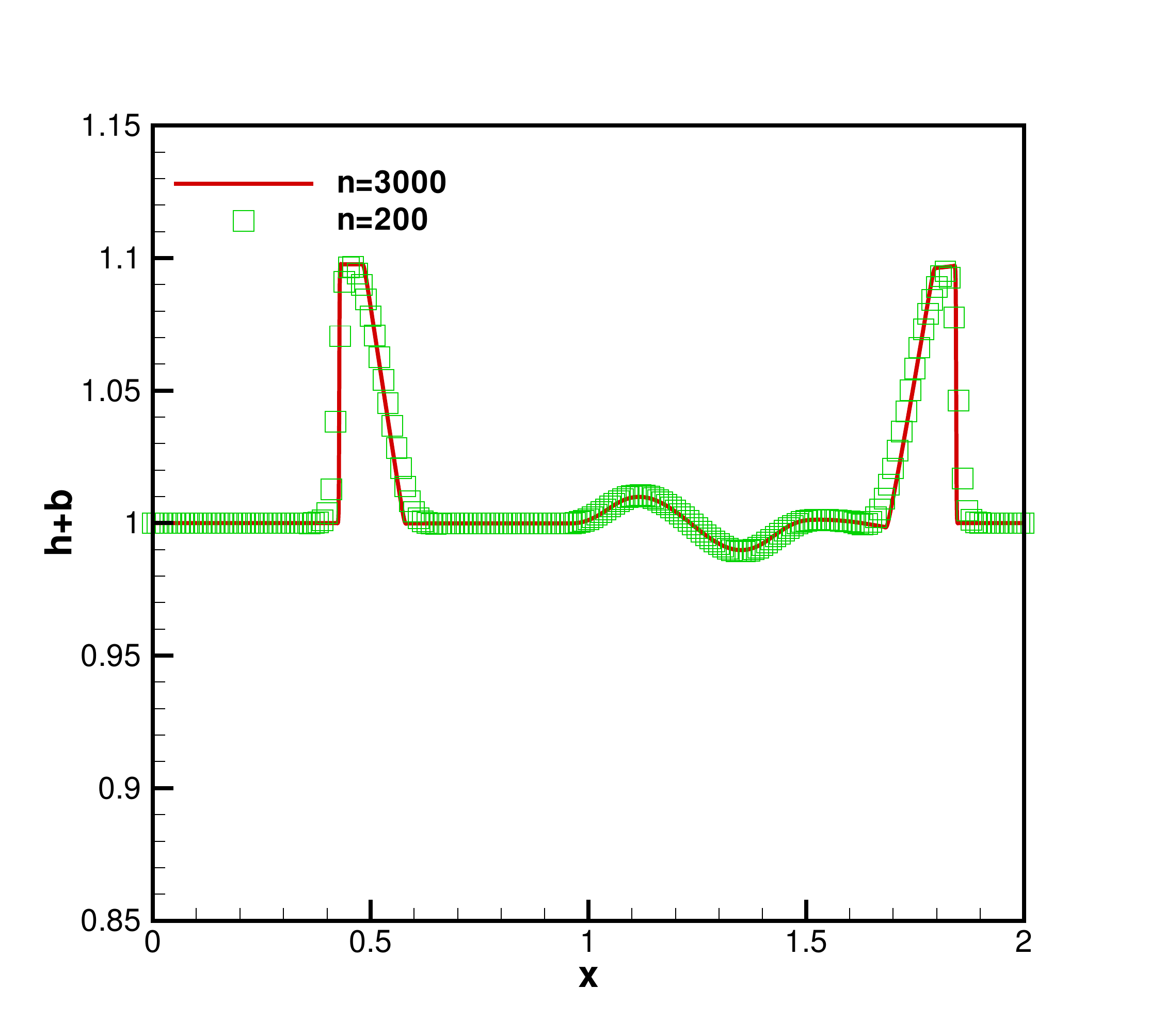}
\includegraphics[width=2.45in]{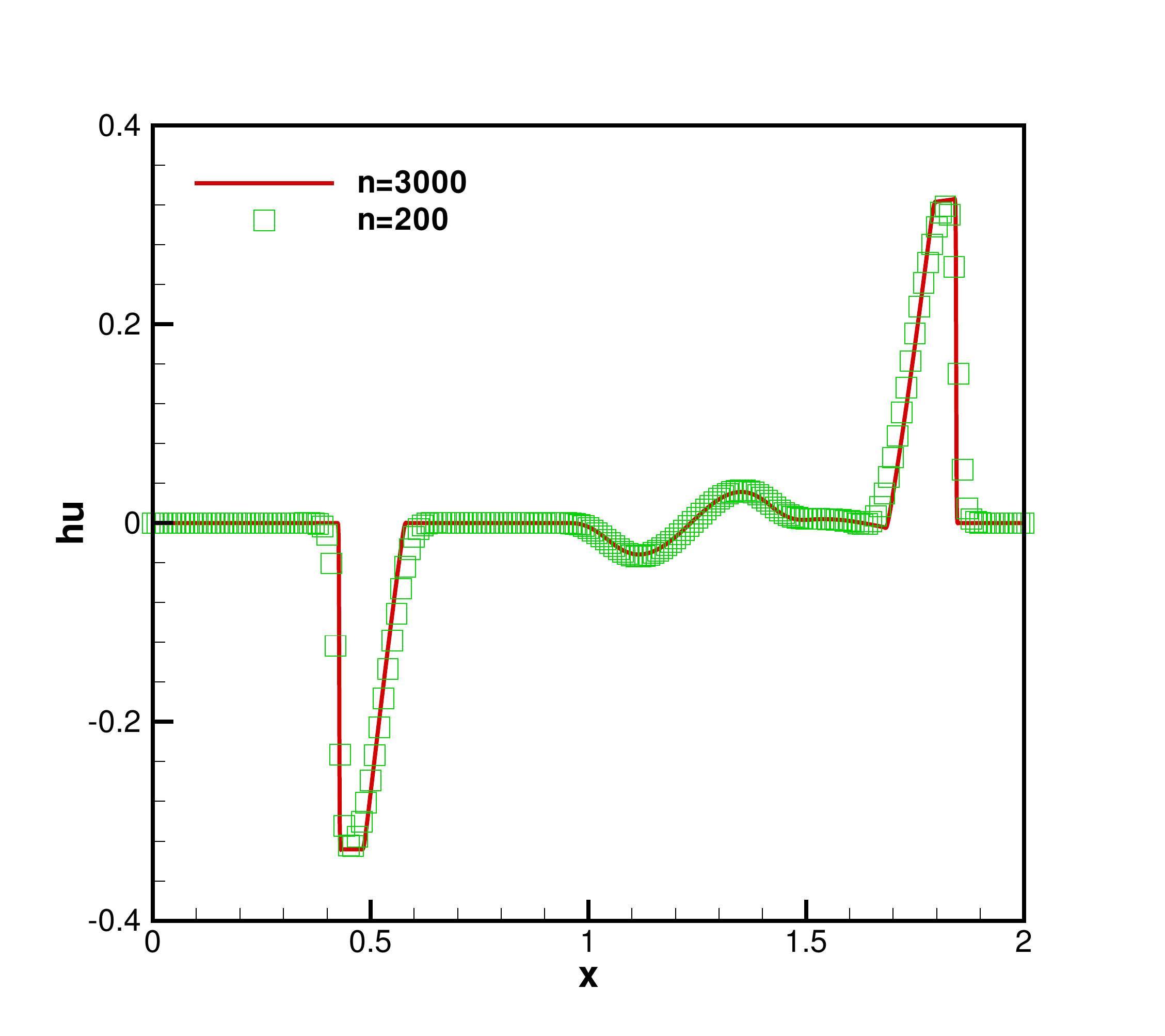}
\caption{Small perturbation of a steady state water flow with a big pulse, $t=0.2$ s.
Water surface level $h+b$ (left) and water discharge $hu$ (right).}\label{f:big-5}
\end{figure}

%--------------------------------- small pusle ----------------------------------

\begin{figure}
\centering
\includegraphics[width=2.45in]{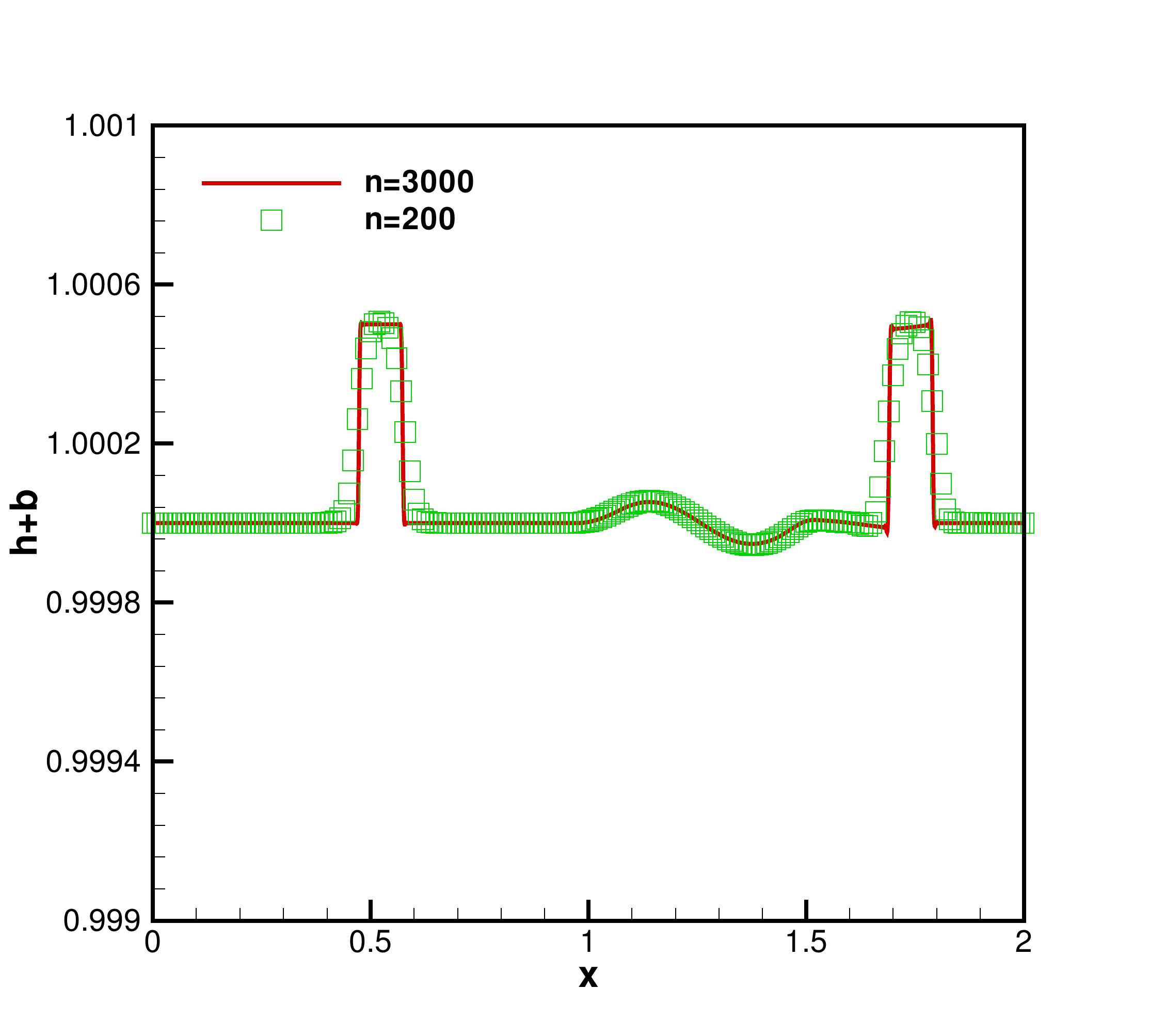}
\includegraphics[width=2.45in]{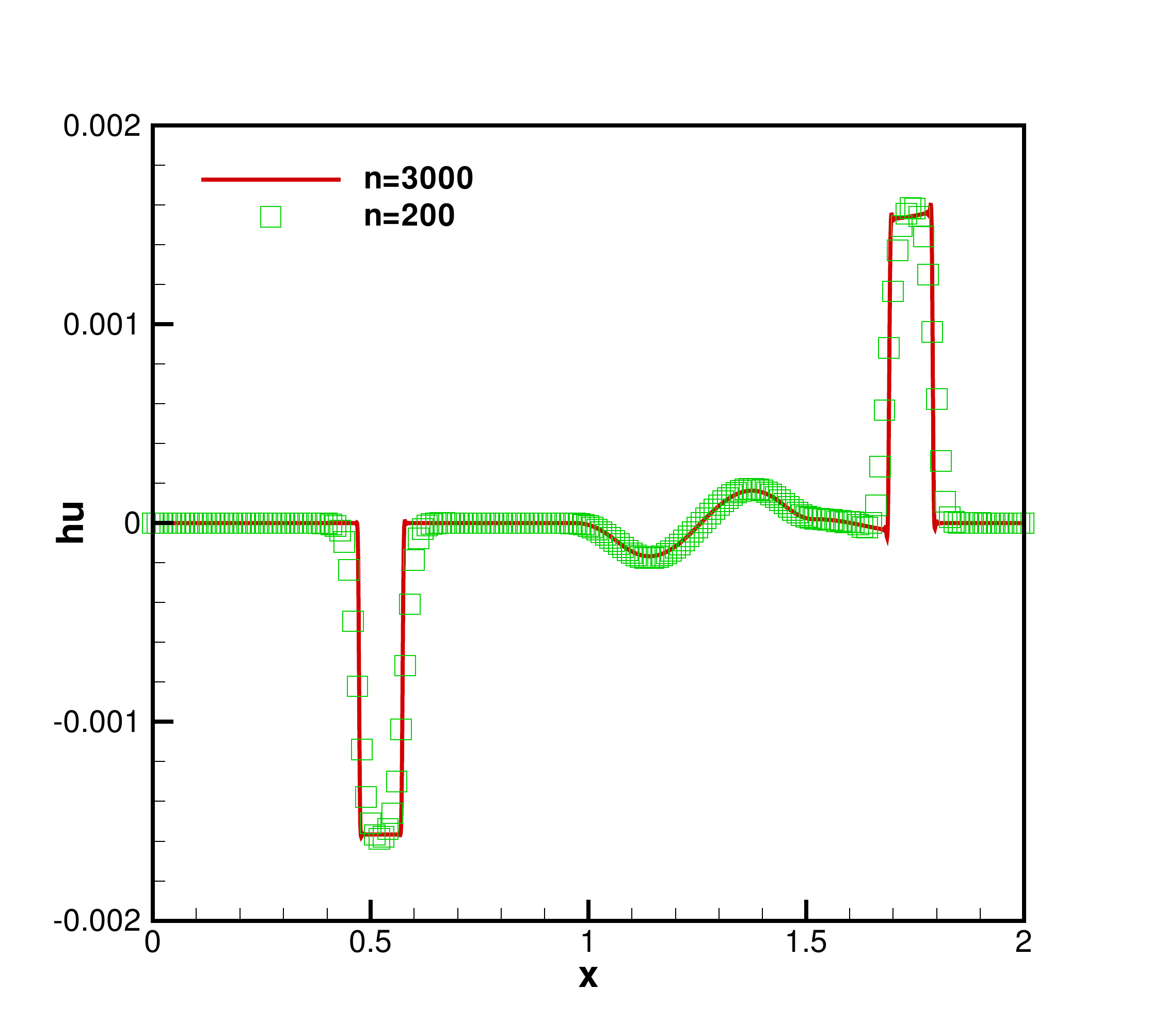}
\caption{Small perturbation of a steady state water flow with a small pulse, $t =0.2$ s. Water surface level $h+b$ (left) and water discharge $hu$ (right).}\label{f:small-5}
\end{figure}

\begin{figure}
\centering
\includegraphics[width=2.45in]{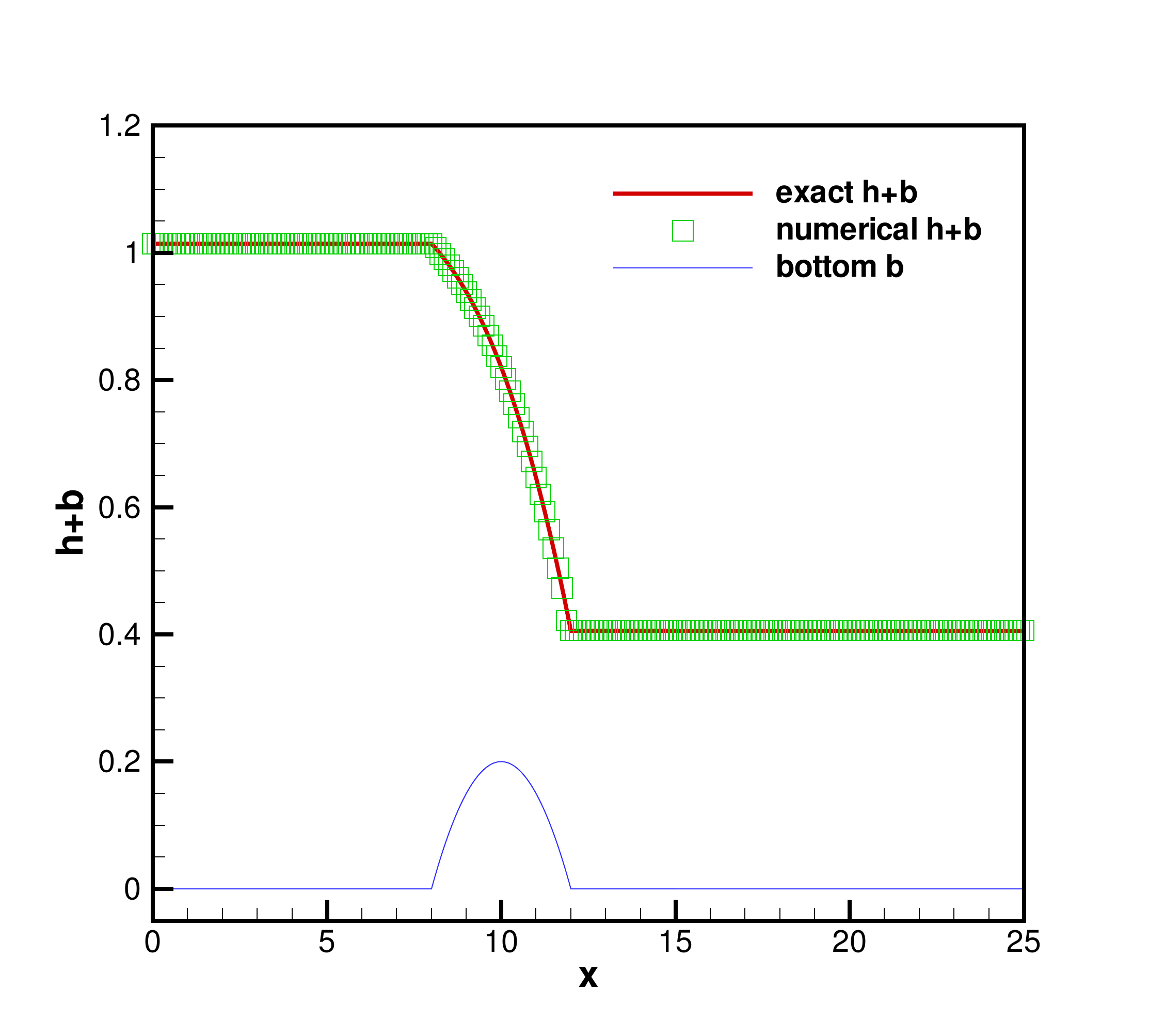}
\includegraphics[width=2.45in]{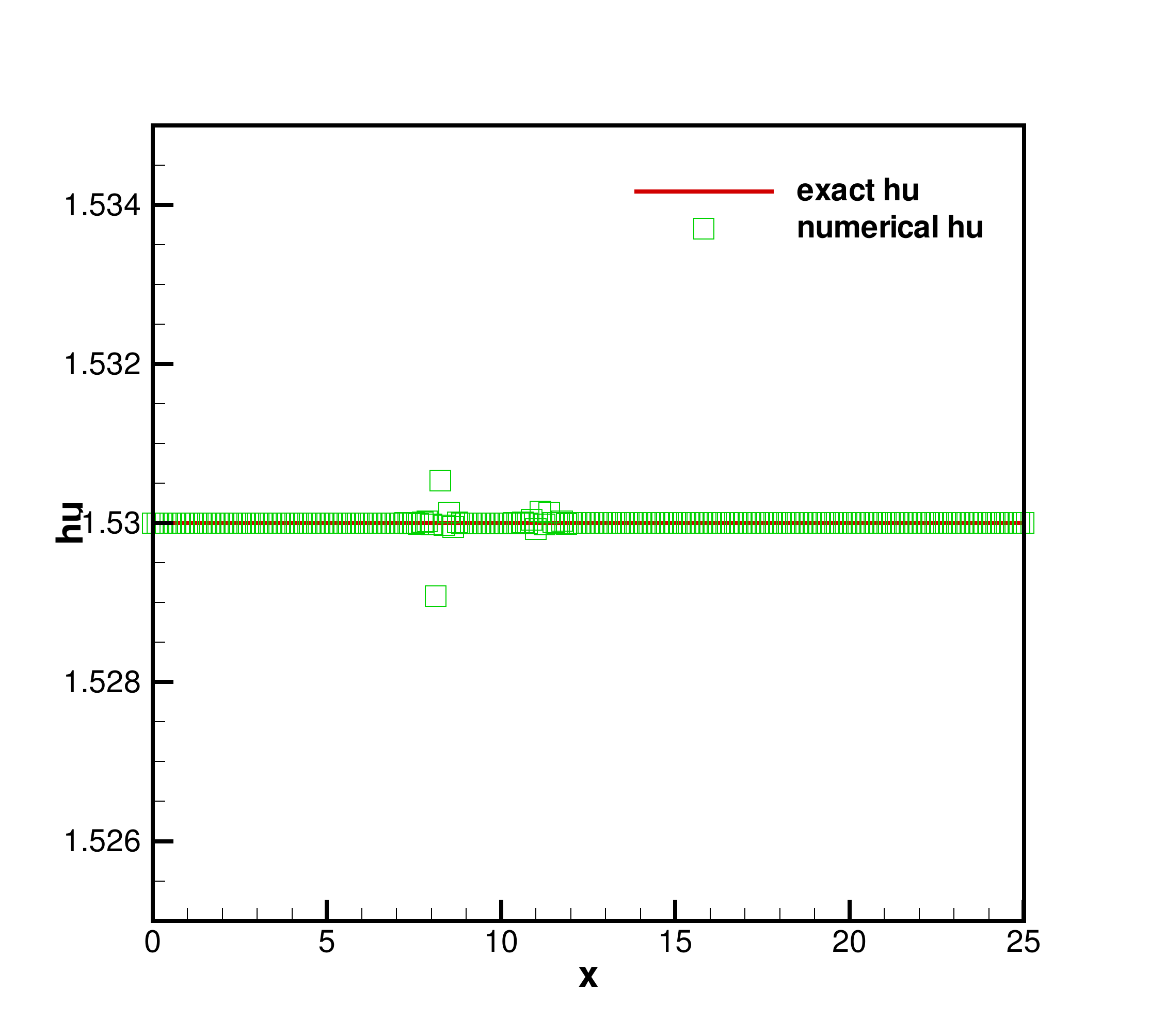}
\caption{Transcritical flow without a shock, $t=200$ s. Water
surface level $h+b$ (left) and water discharge $hu$ (right).}  \label{f:bump-a}
\end{figure}

\begin{figure}
\centering
\includegraphics[width=2.45in]{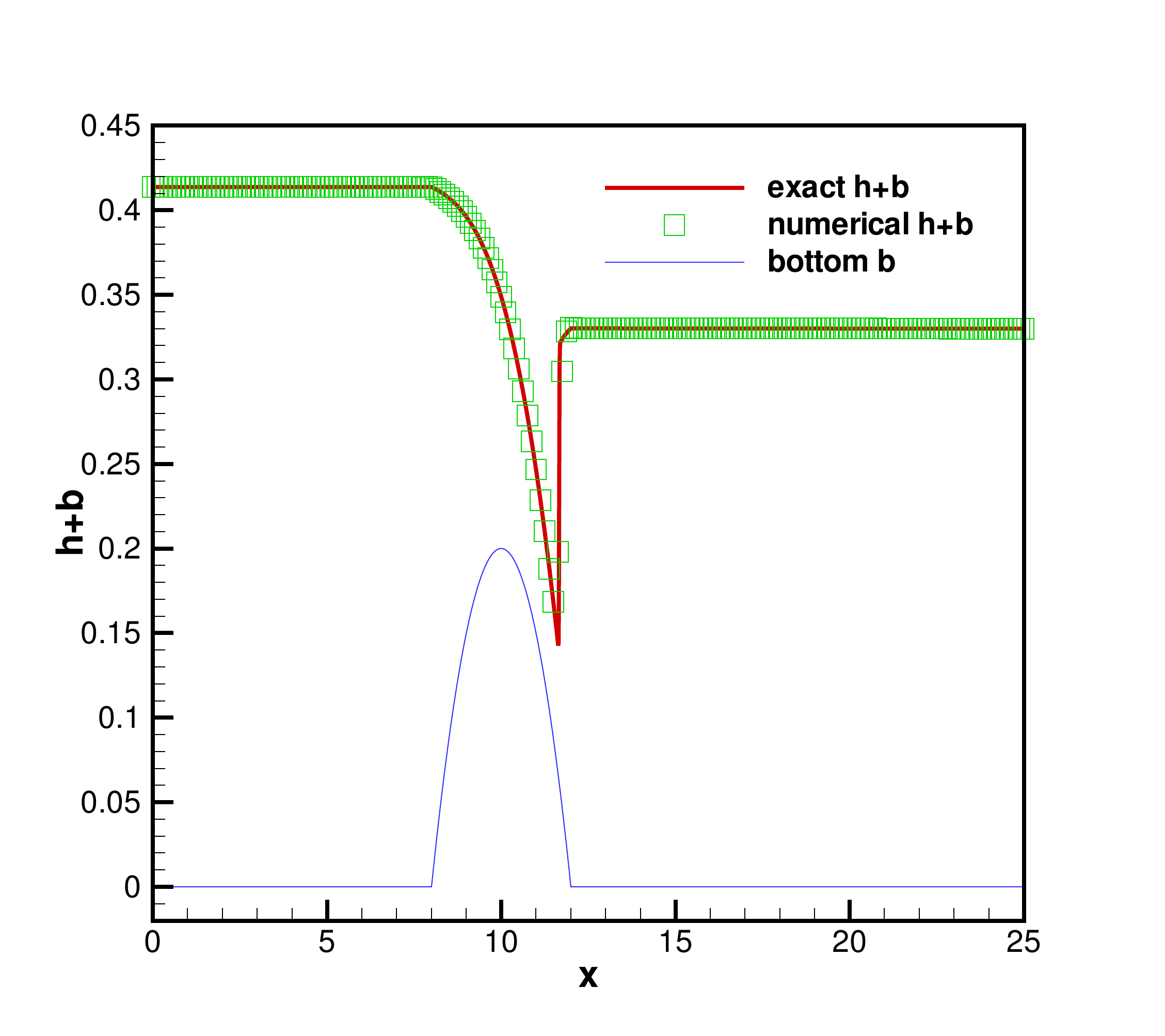}
\includegraphics[width=2.45in]{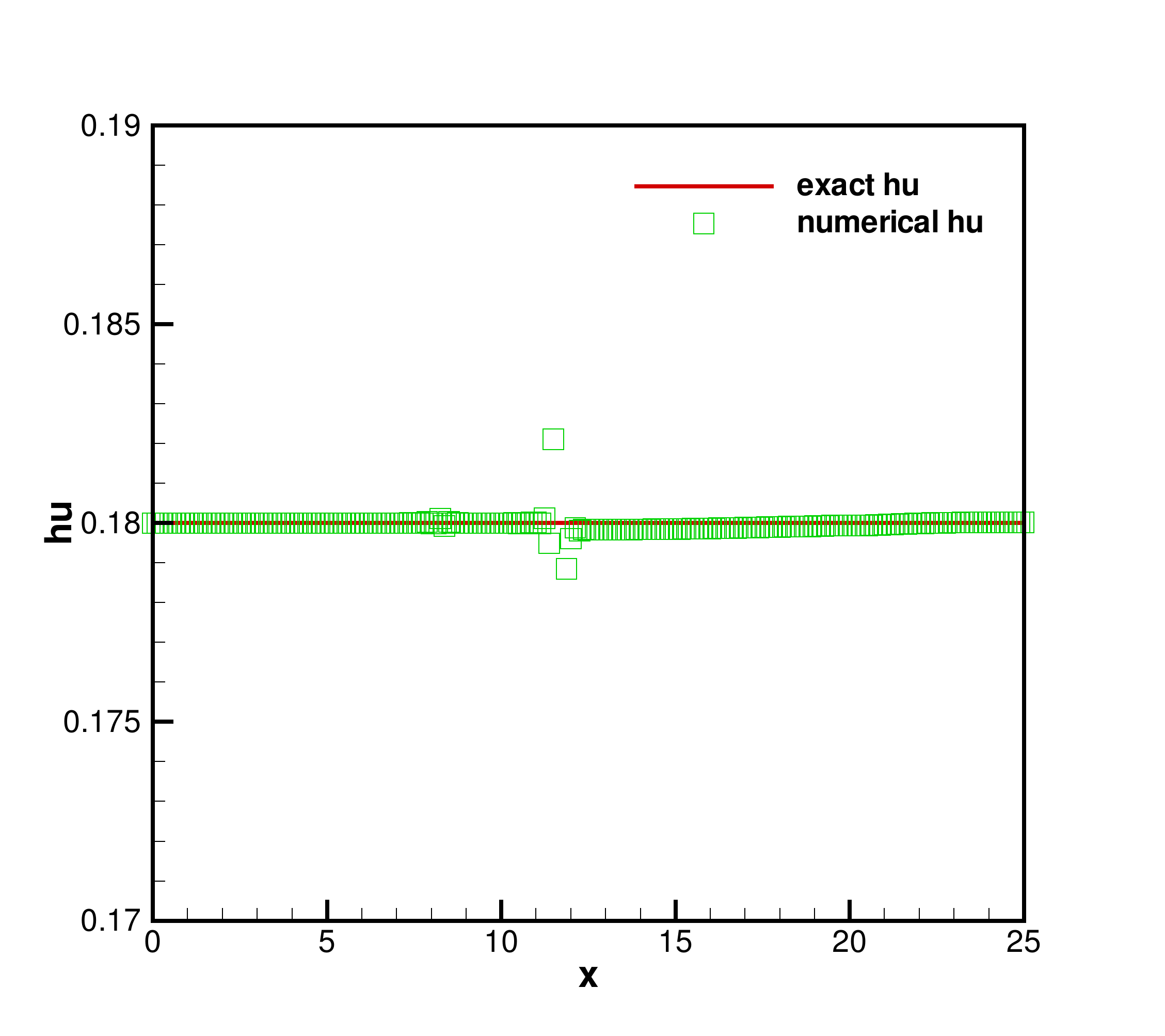}
\caption{Transcritical flow with a shock, $t=200$ s. Water
surface level $h+b$ (left) and water discharge $hu$ (right).}  \label{f:bump-b}
\end{figure}

\begin{figure}
\centering
\includegraphics[width=2.45in]{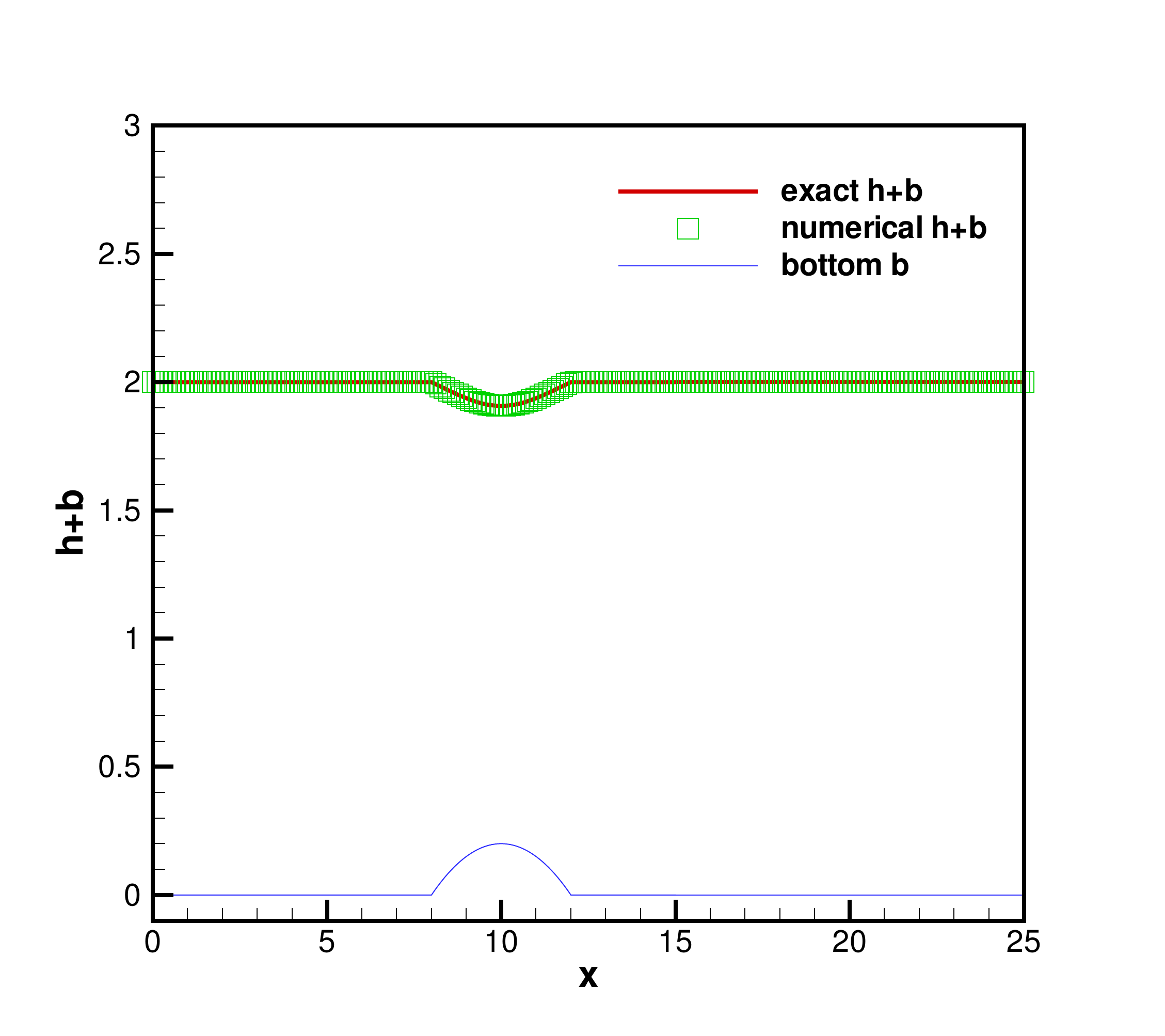}
\includegraphics[width=2.45in]{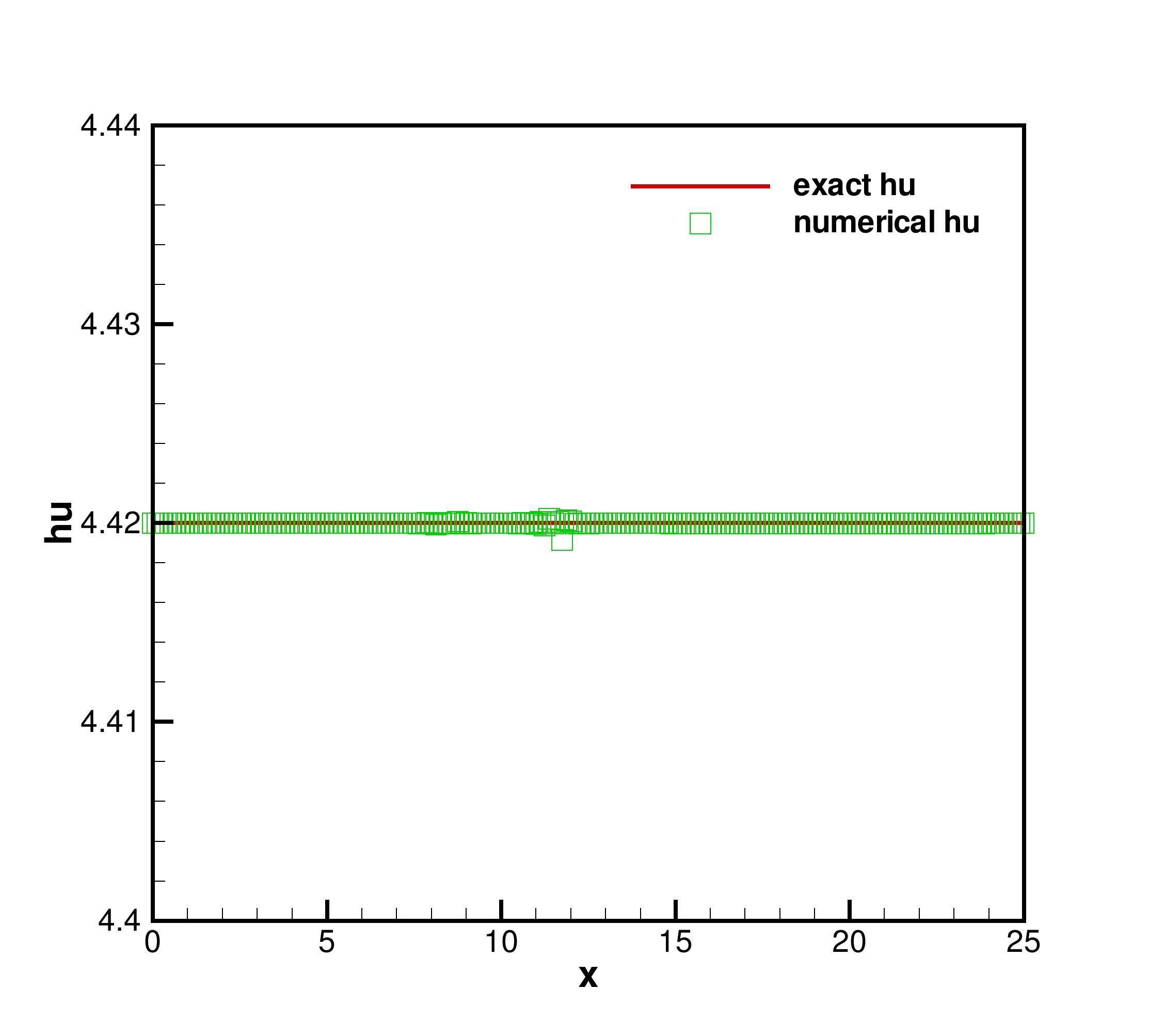}
\caption{Subcritical flow, $t=200$ s. Water
surface level $h+b$ (left) and water discharge $hu$ (right).} \label{f:bump-c}
\end{figure}

\begin{figure}
\centering
\includegraphics[width=2.45in]{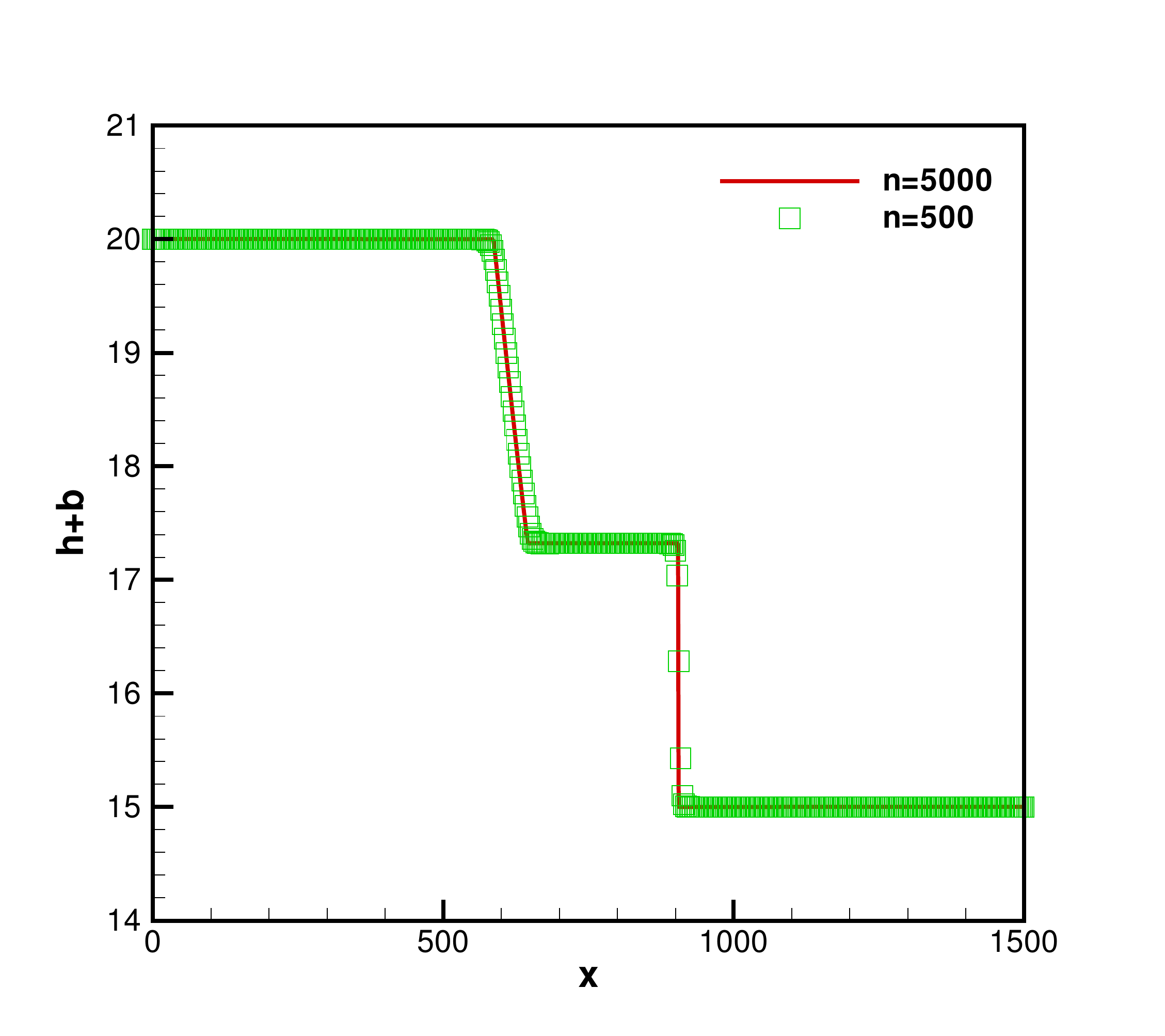}
\includegraphics[width=2.45in]{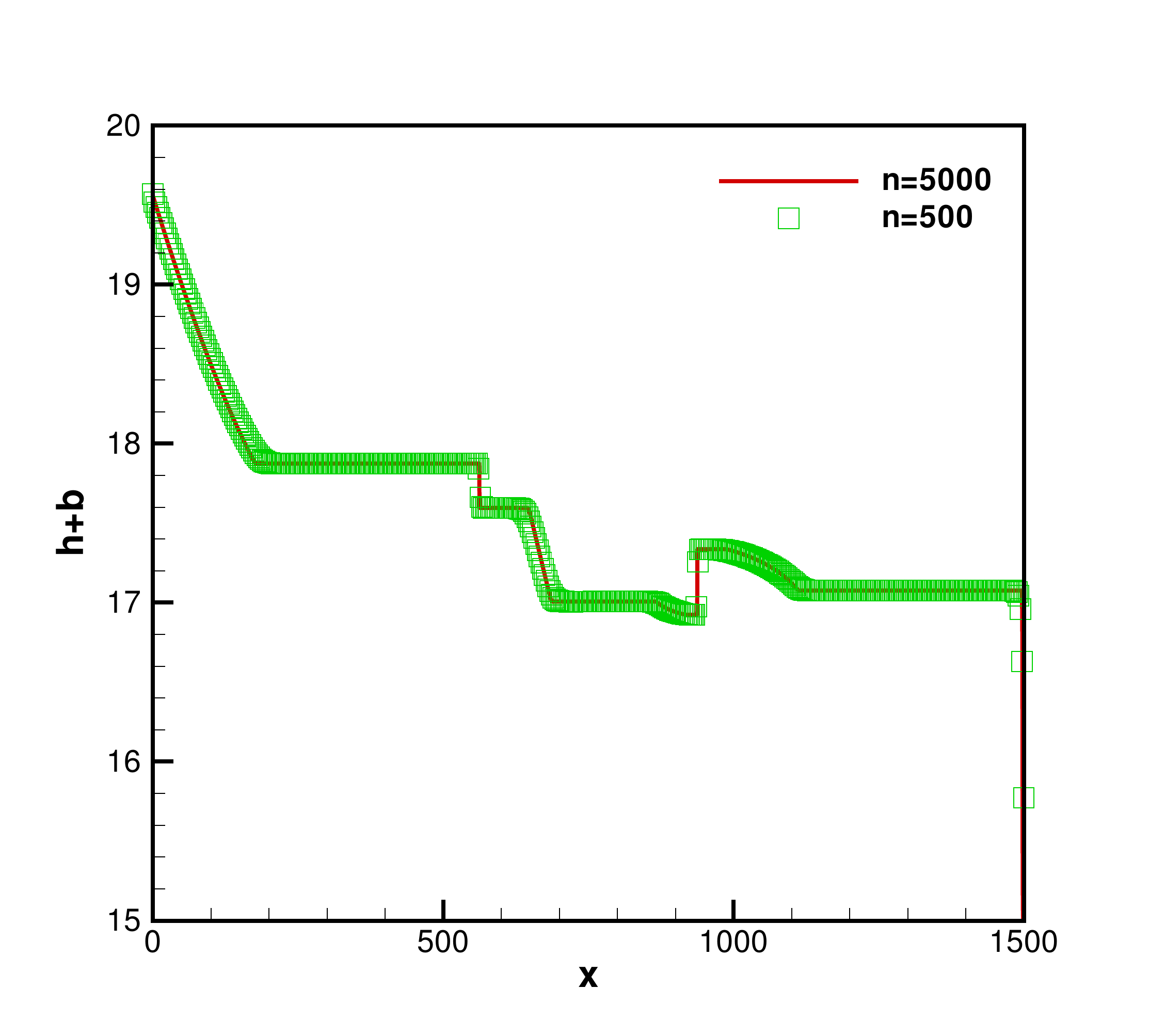}
\caption{The dam break problem over a rectangular bump. Water
surface level $h+b$ at $t=15$ s (left) and $t=60$ s (right).}\label{rectangular-bump}
\end{figure}

\begin{figure}
\centering
\includegraphics[width=2.45in]{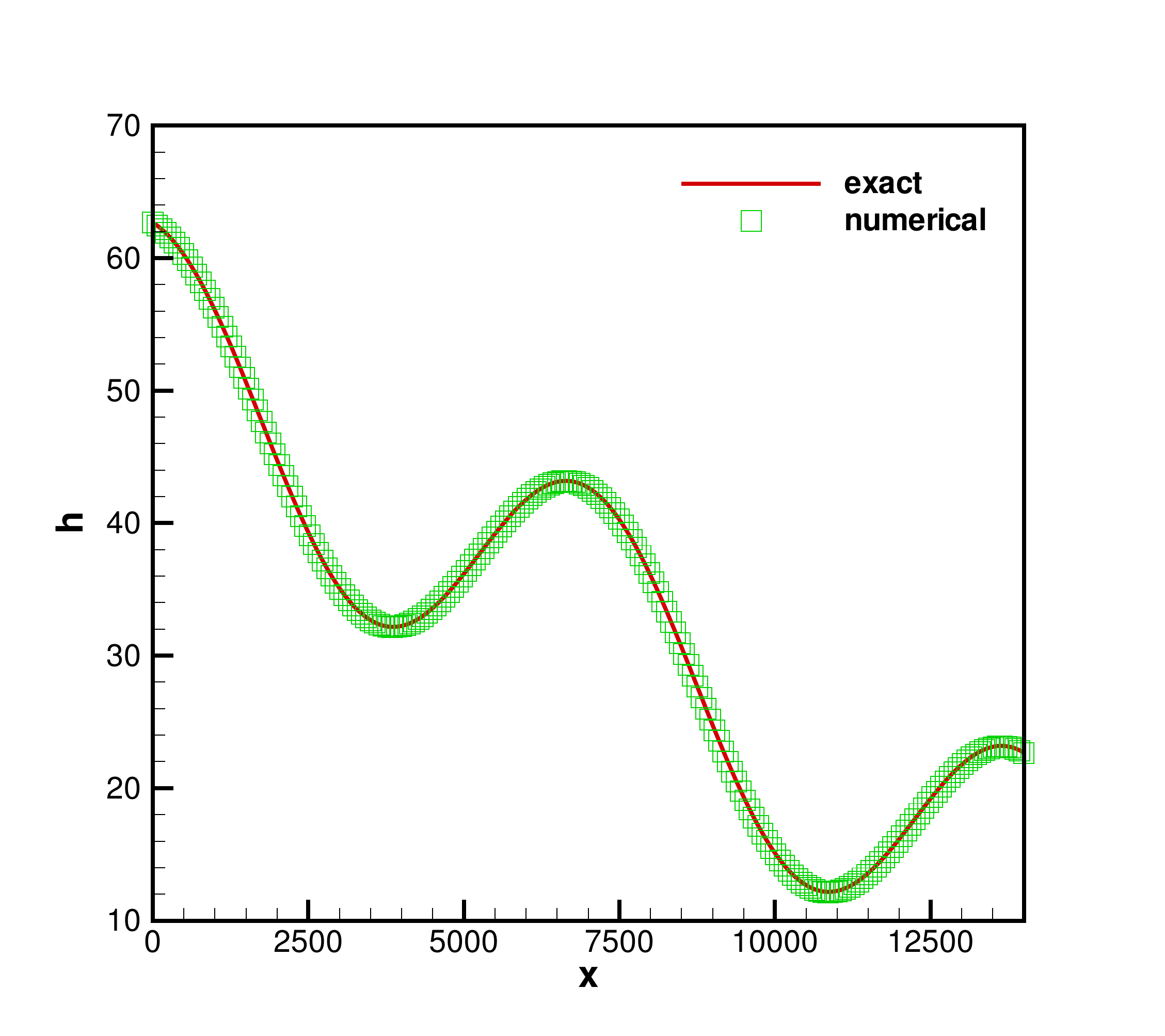}
\includegraphics[width=2.45in]{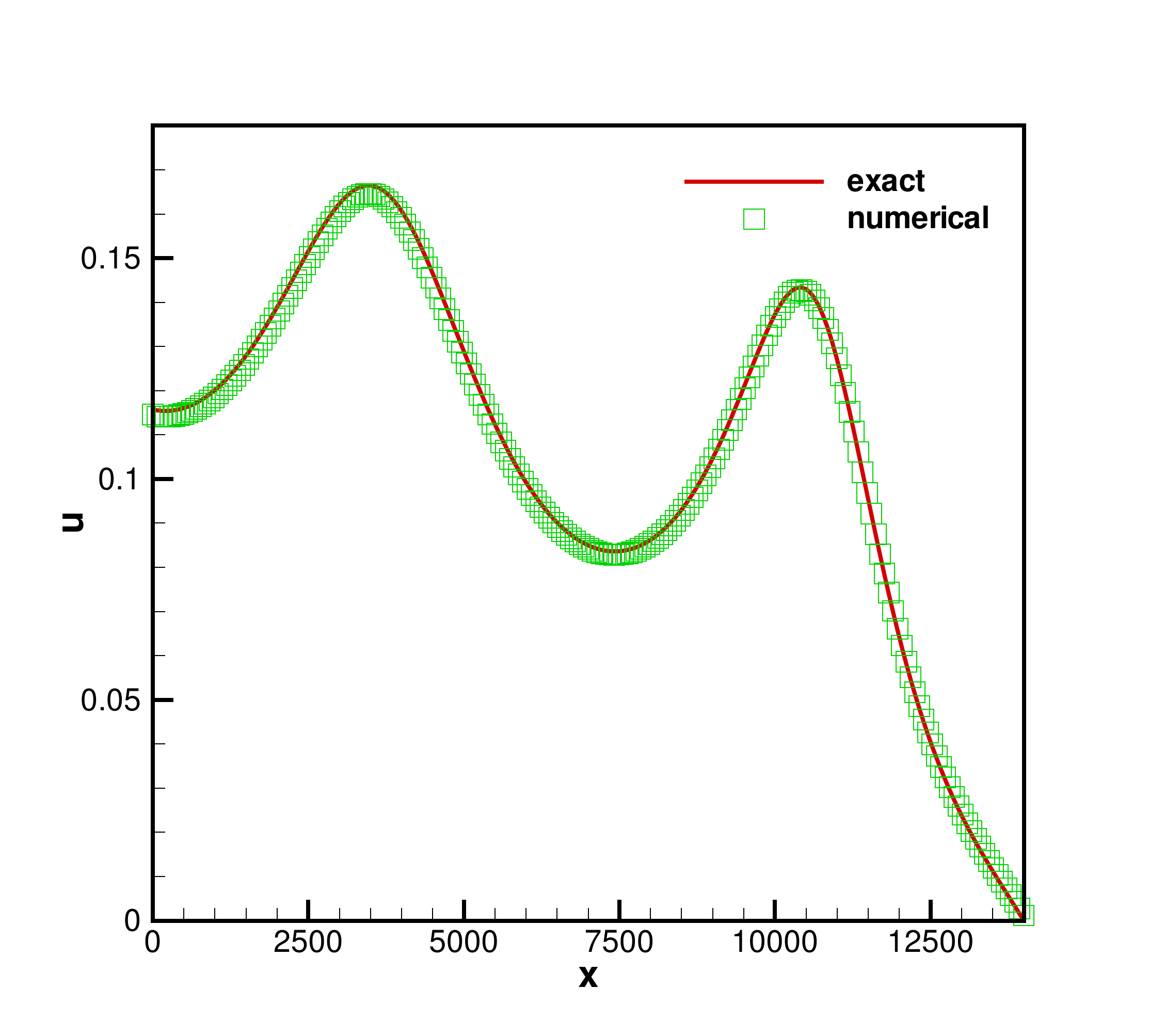}
\caption{The tidal wave flow, $t = 7552.13$ s. Water
depth $h$ (left) and water velocity $u$ (right).} \label{tidal-flow}
\end{figure}

\begin{figure}
\centering
\includegraphics[width=2.45in]{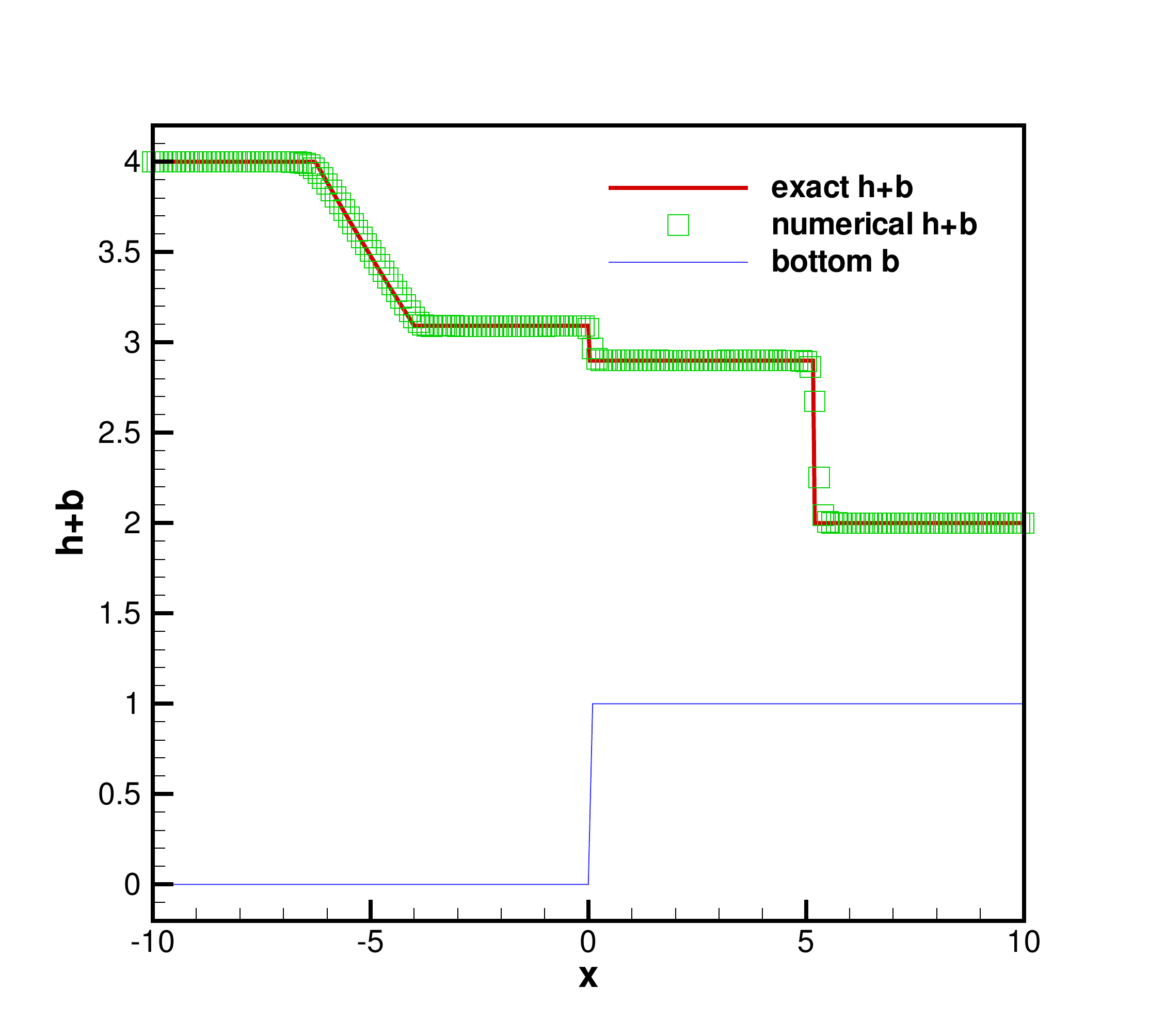}
\includegraphics[width=2.45in]{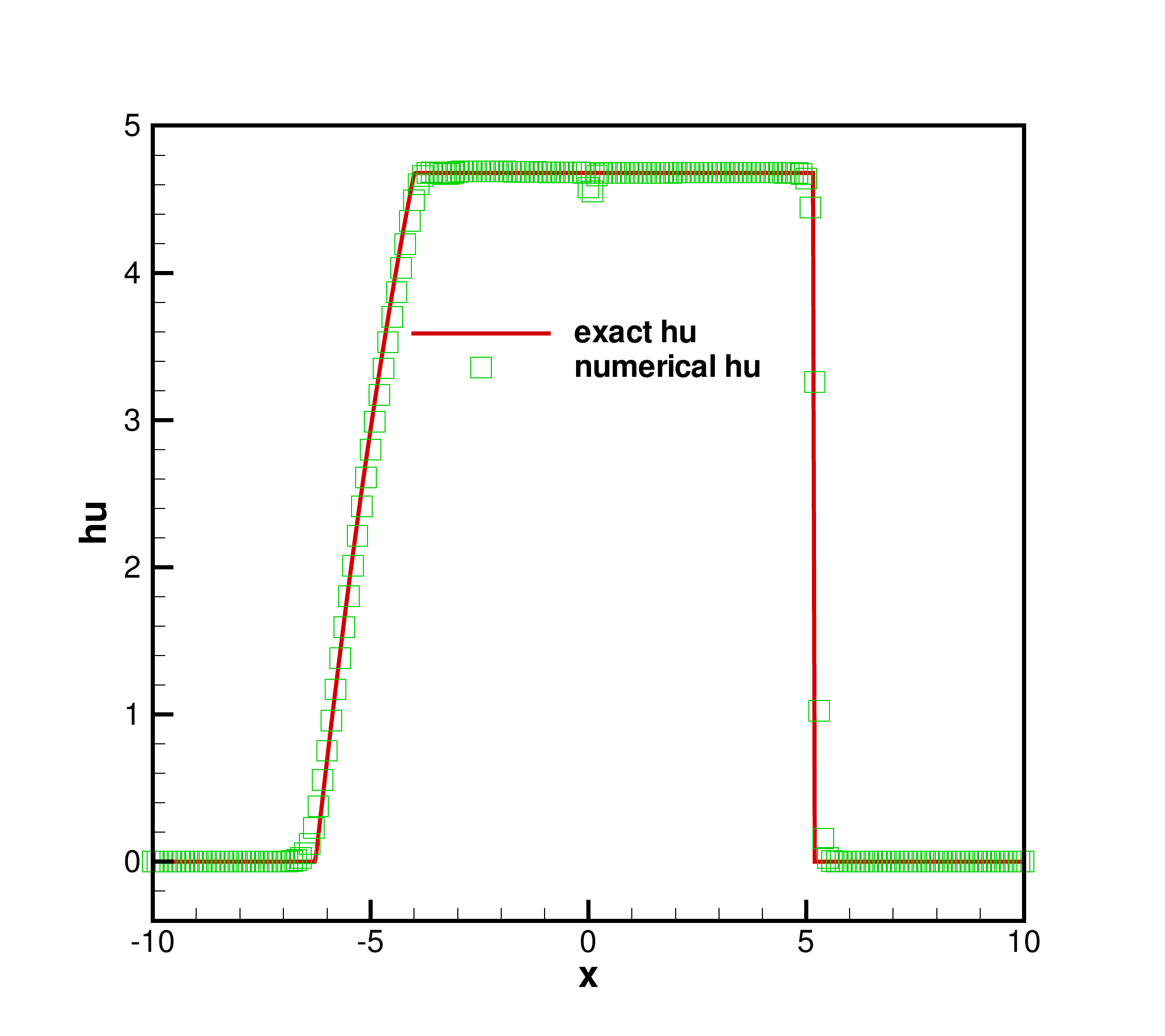}
\caption{$1$-rarefaction and $2$-shock problem, $t=1$ s. Water
surface level $h+b$ (left) and water discharge $hu$ (right).} \label{f:1D-step-RS}
\end{figure}

\begin{figure}
\centering
\includegraphics[width=2.45in]{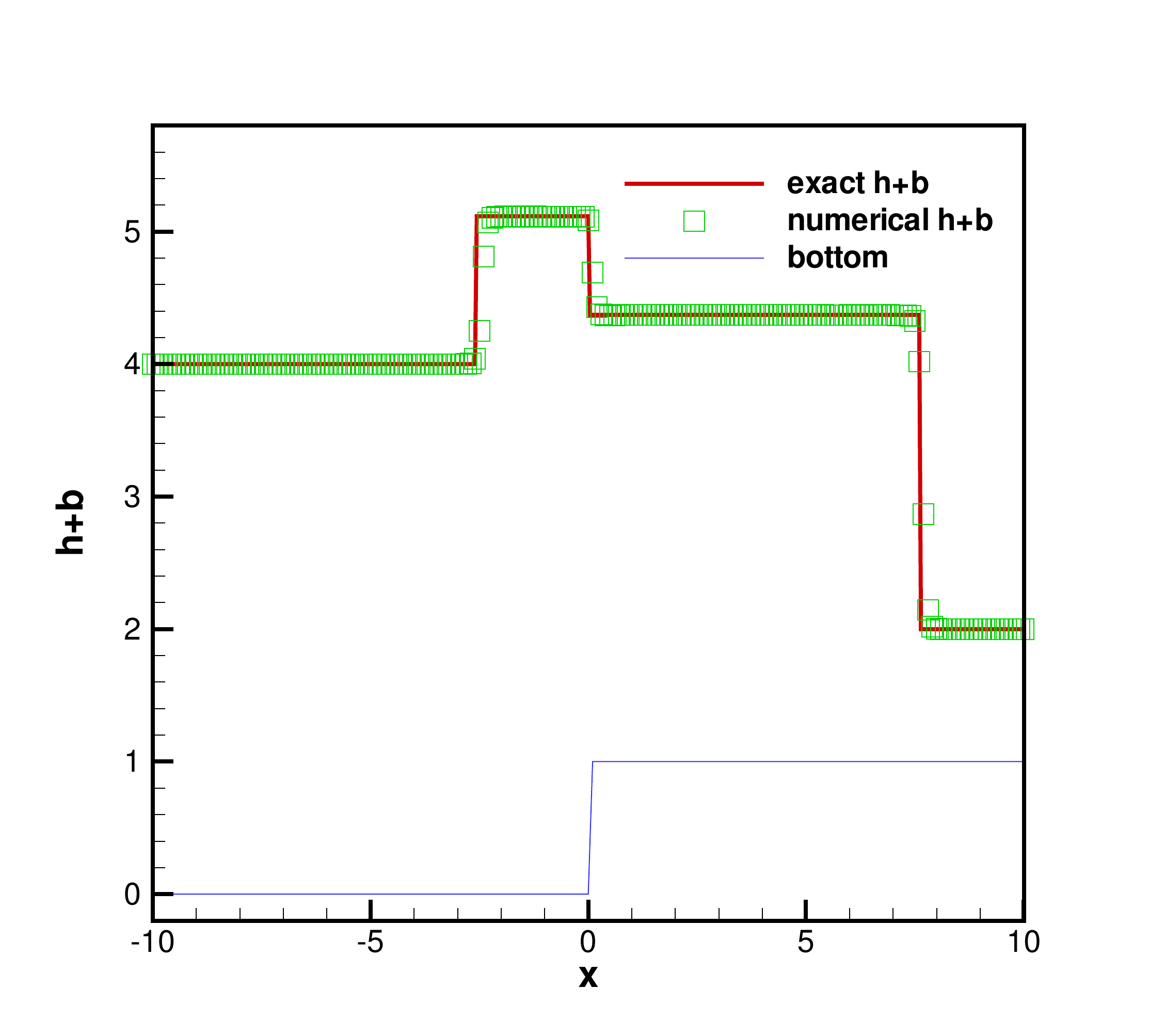}
\includegraphics[width=2.45in]{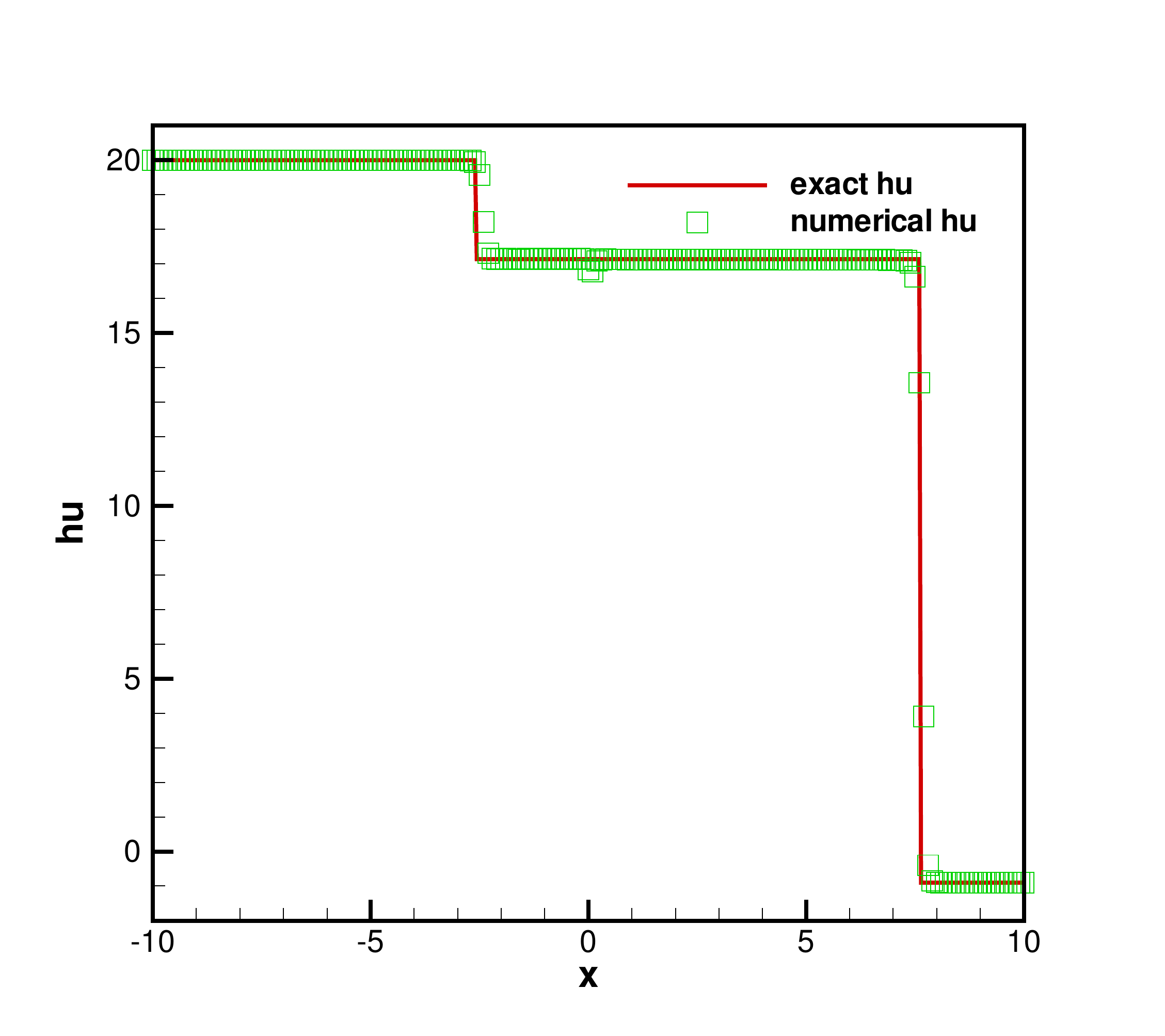}
\caption{1-shock and 2-shock problem, $t=1$ s. Water
surface level $h+b$ (left) and water discharge $hu$ (right).}
\label{f:1D-step-SS}
\end{figure}

\begin{figure}
\includegraphics[width=3.0in]{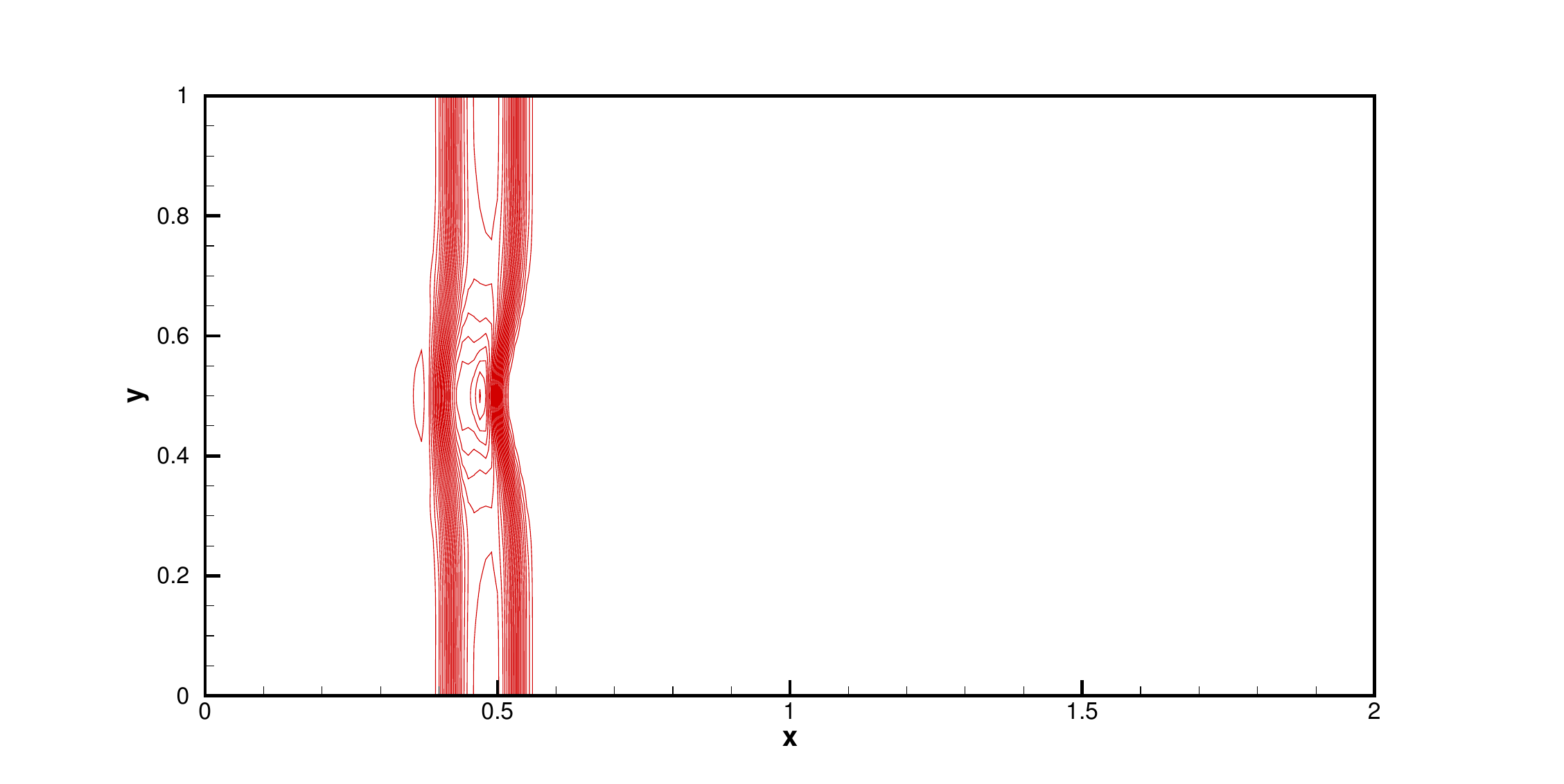}
\includegraphics[width=3.0in]{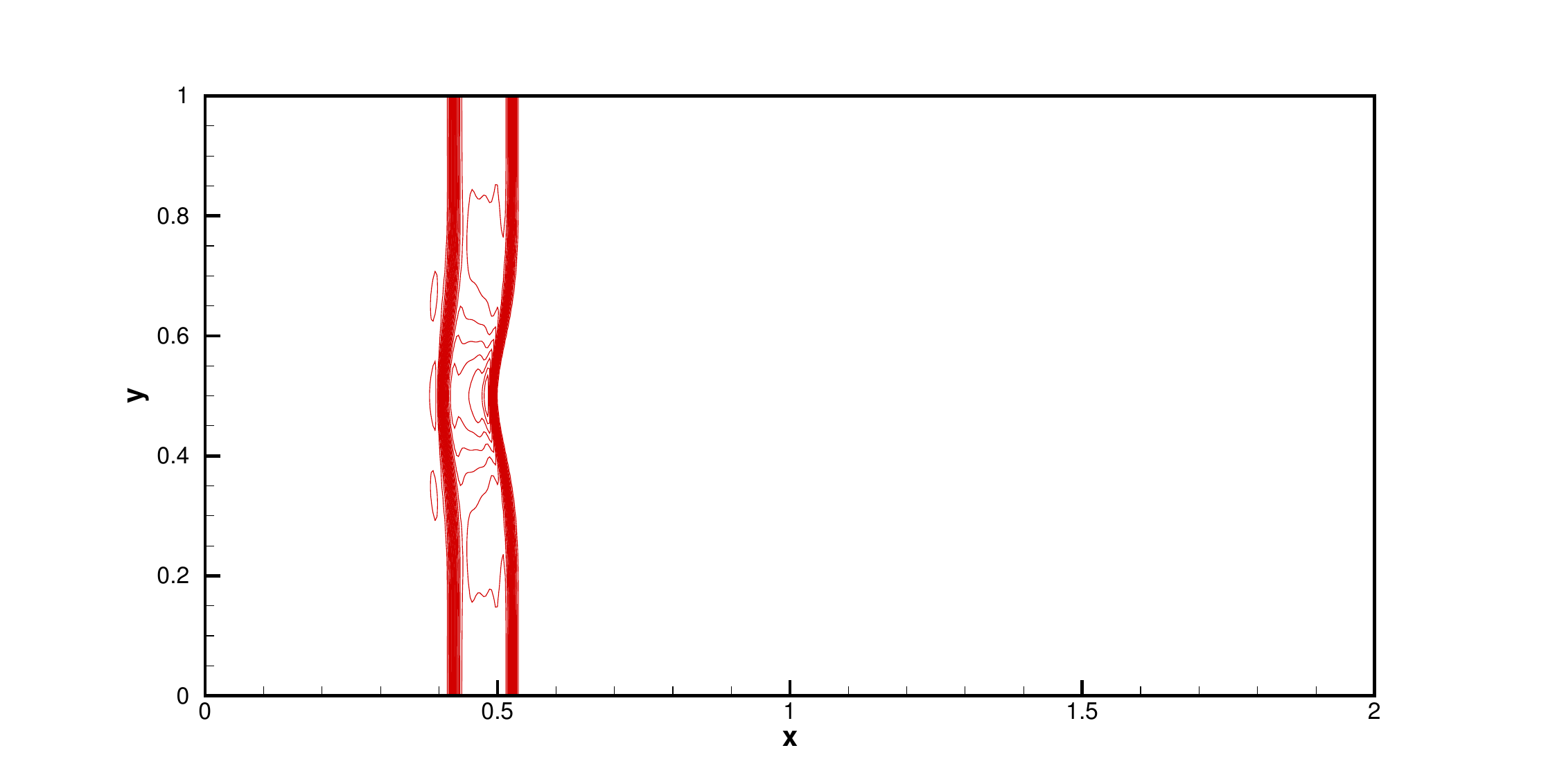}
\includegraphics[width=3.0in]{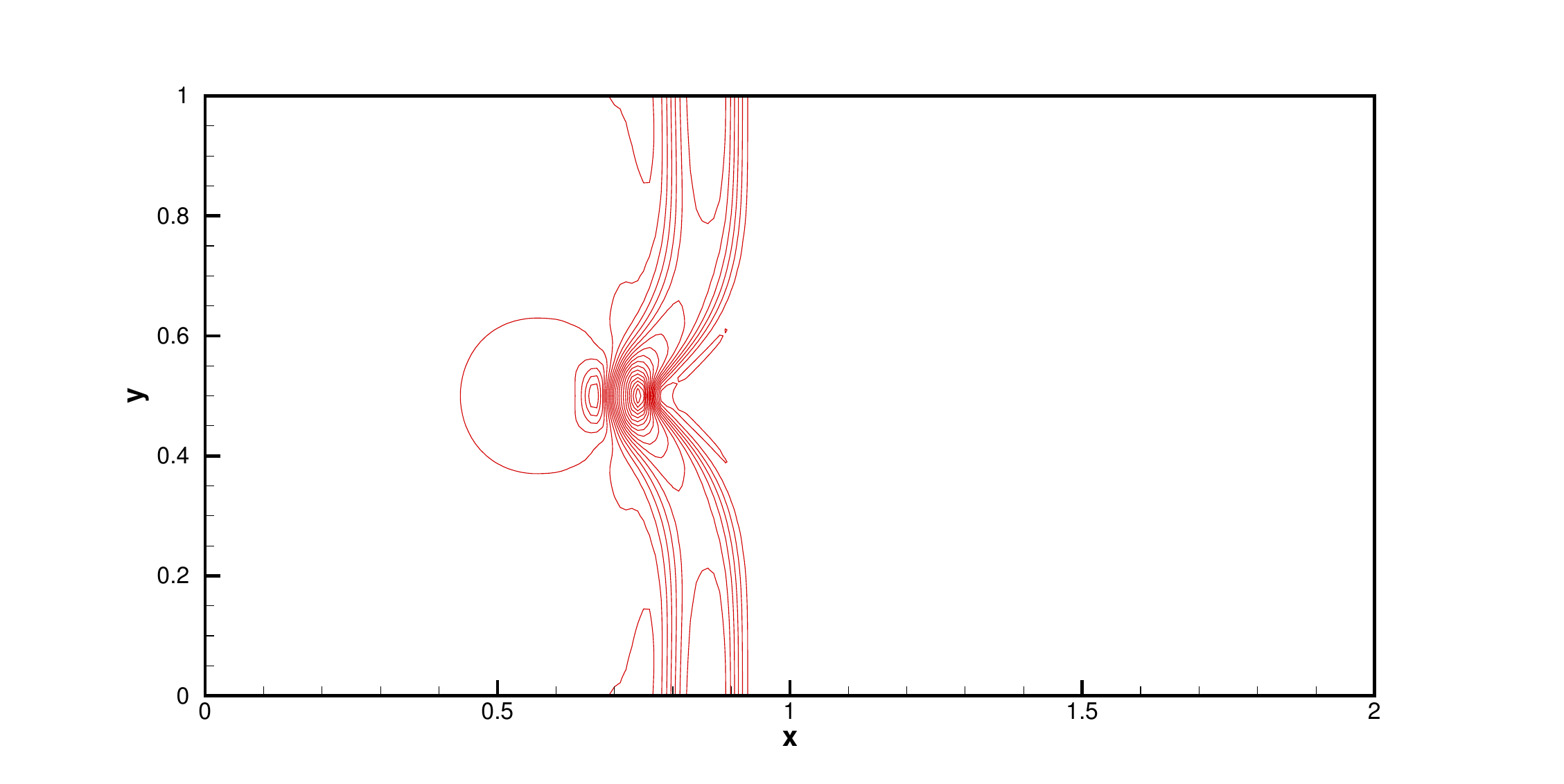}
\includegraphics[width=3.0in]{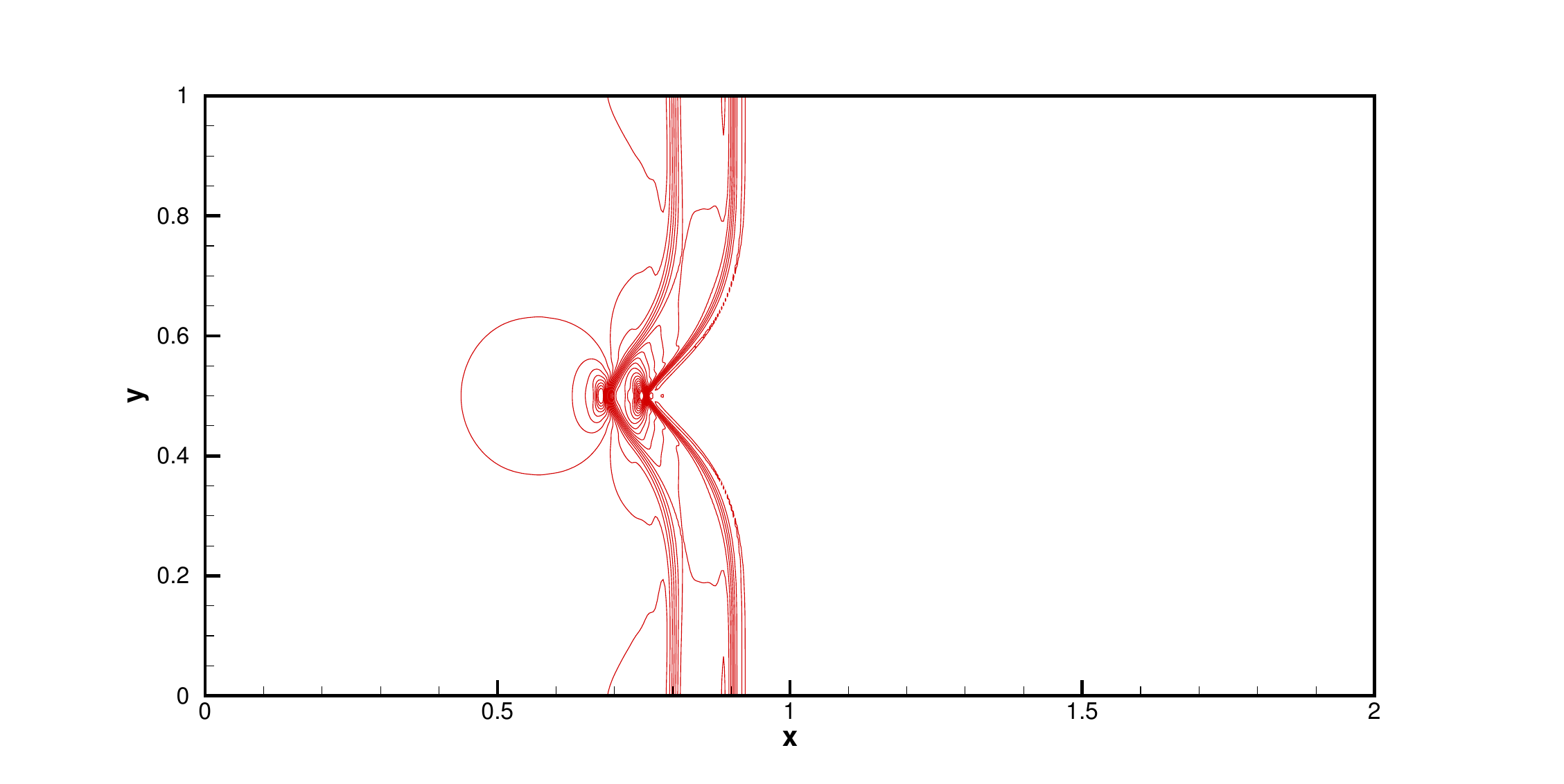}
\includegraphics[width=3.0in]{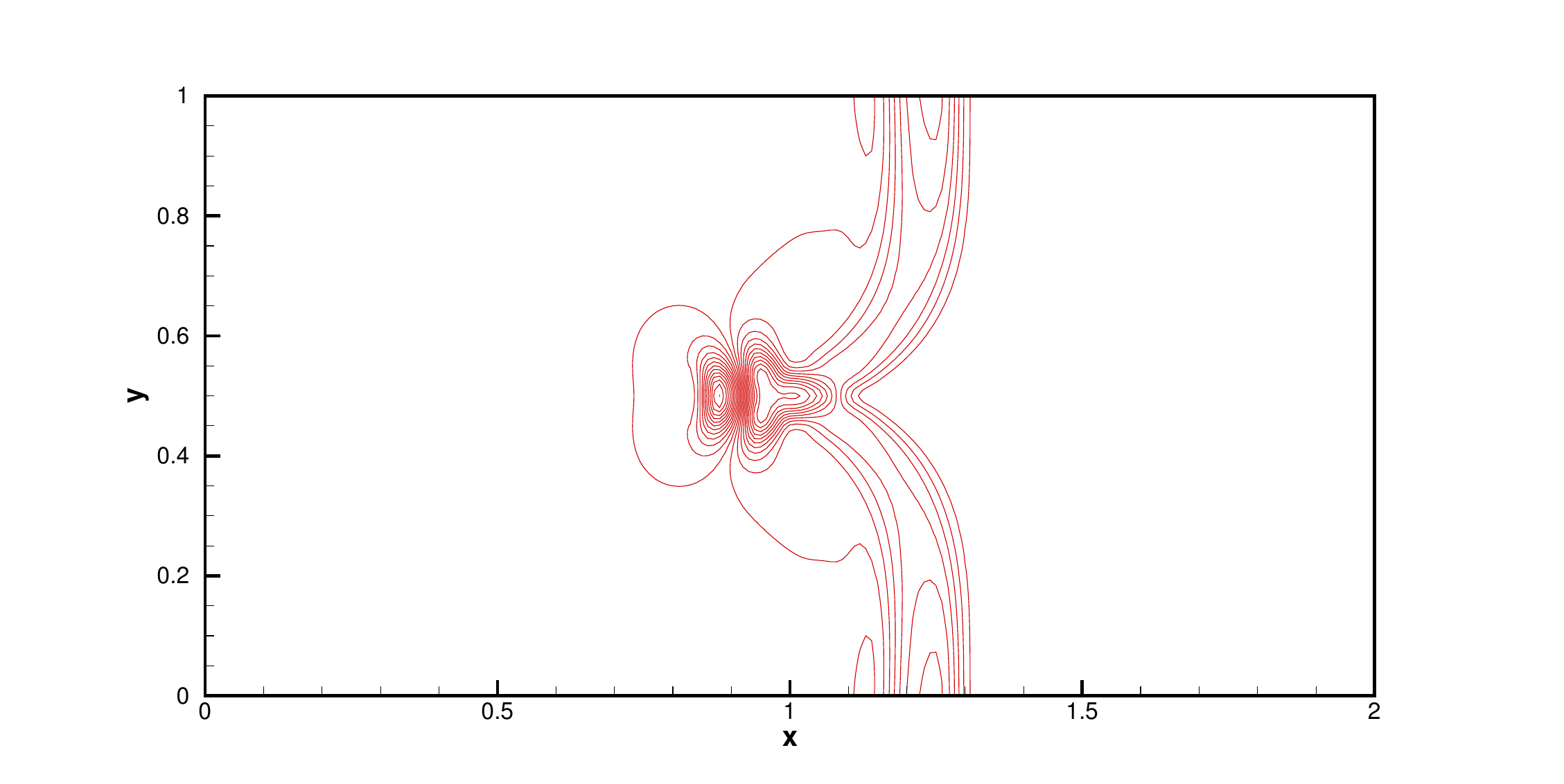}
\includegraphics[width=3.0in]{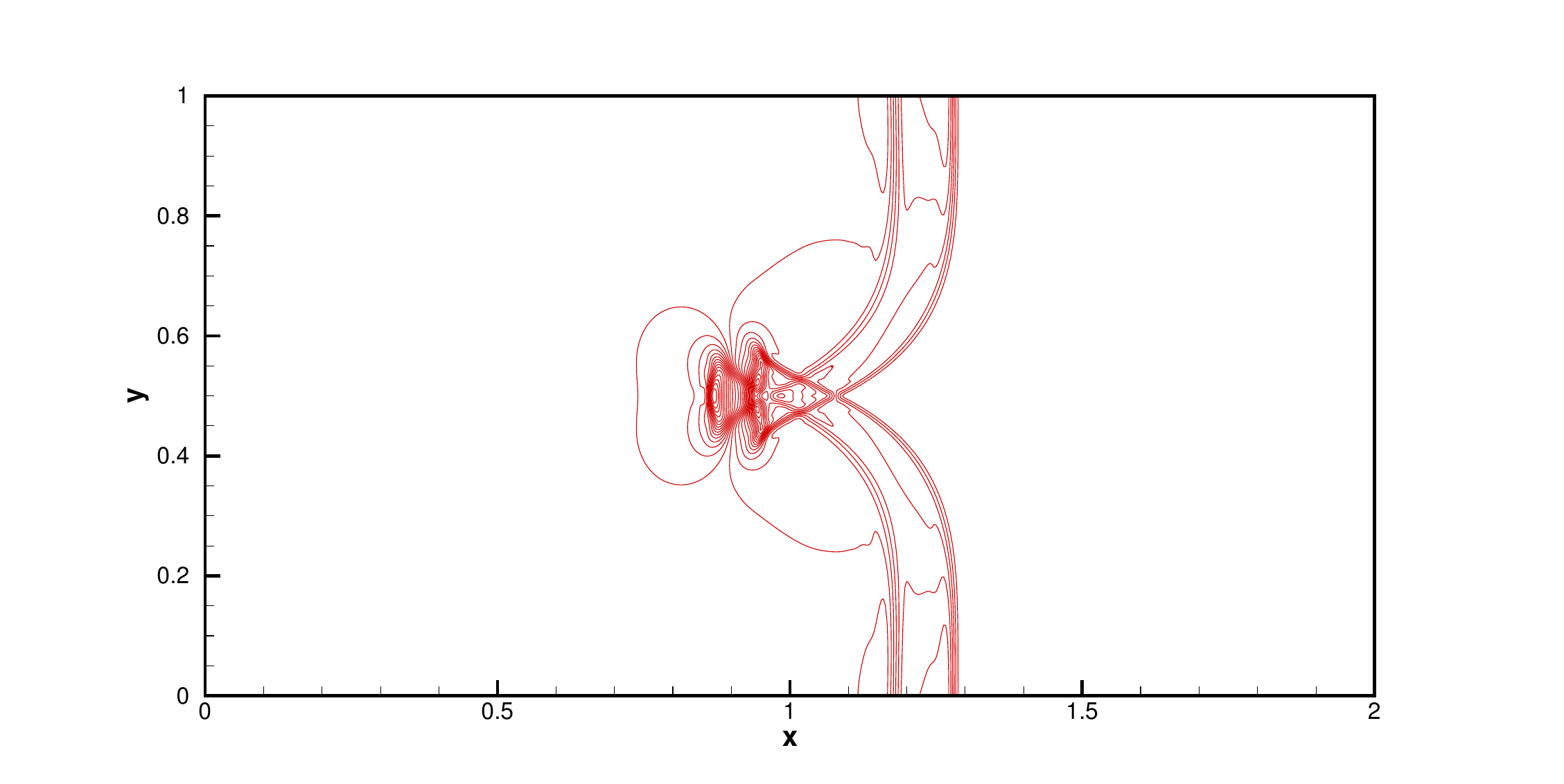}
\includegraphics[width=3.0in]{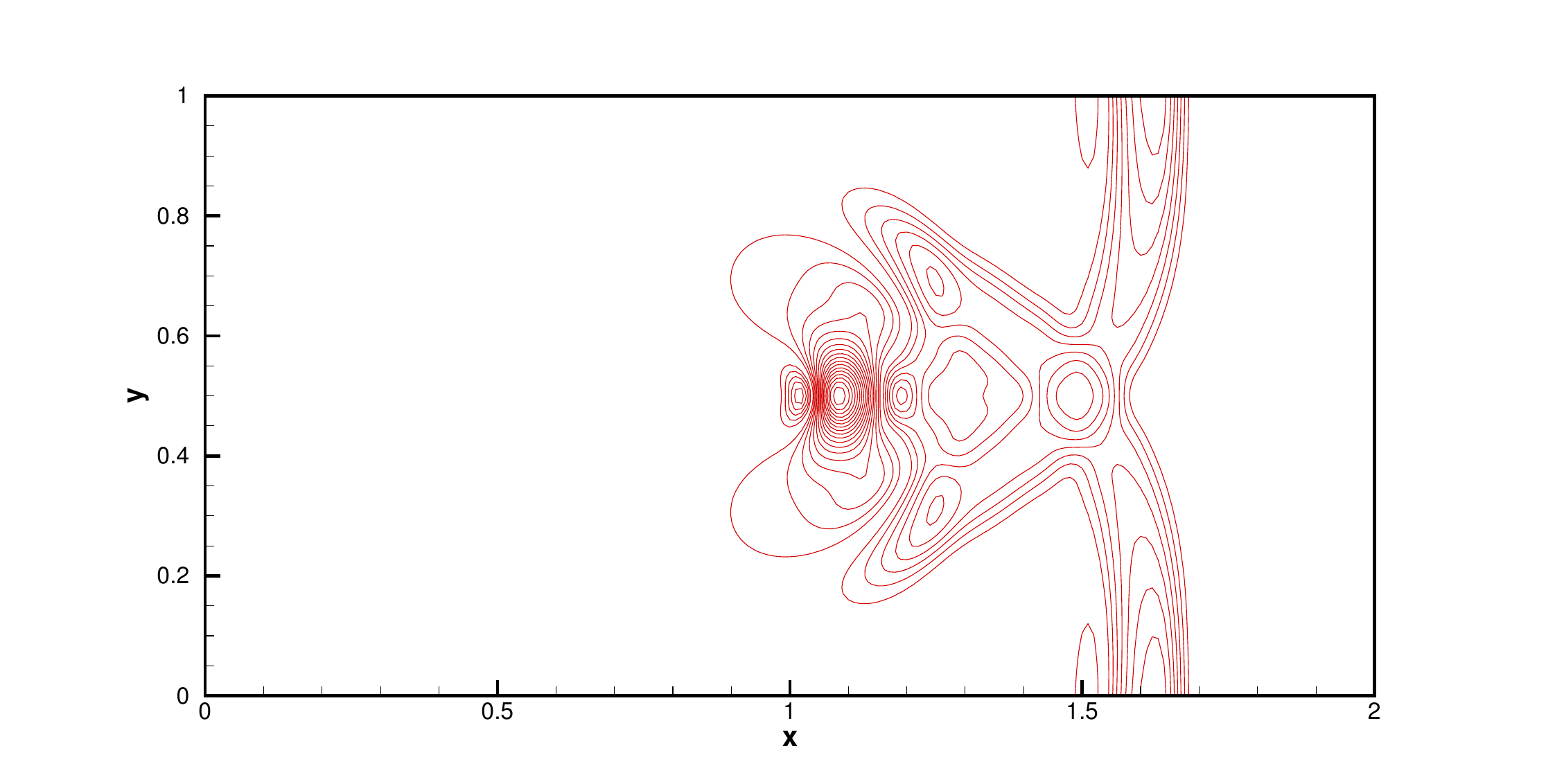}
\includegraphics[width=3.0in]{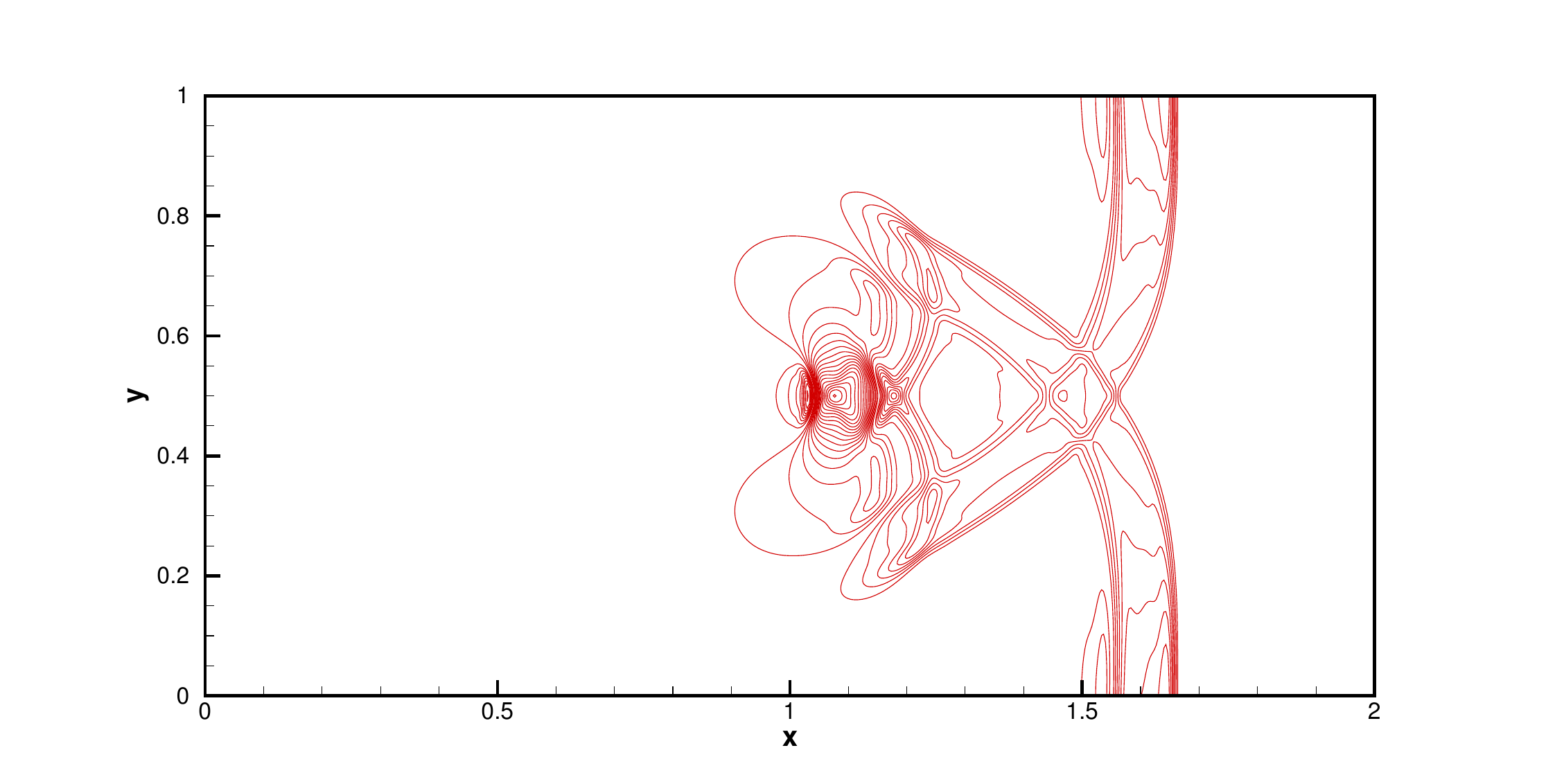}
\includegraphics[width=3.0in]{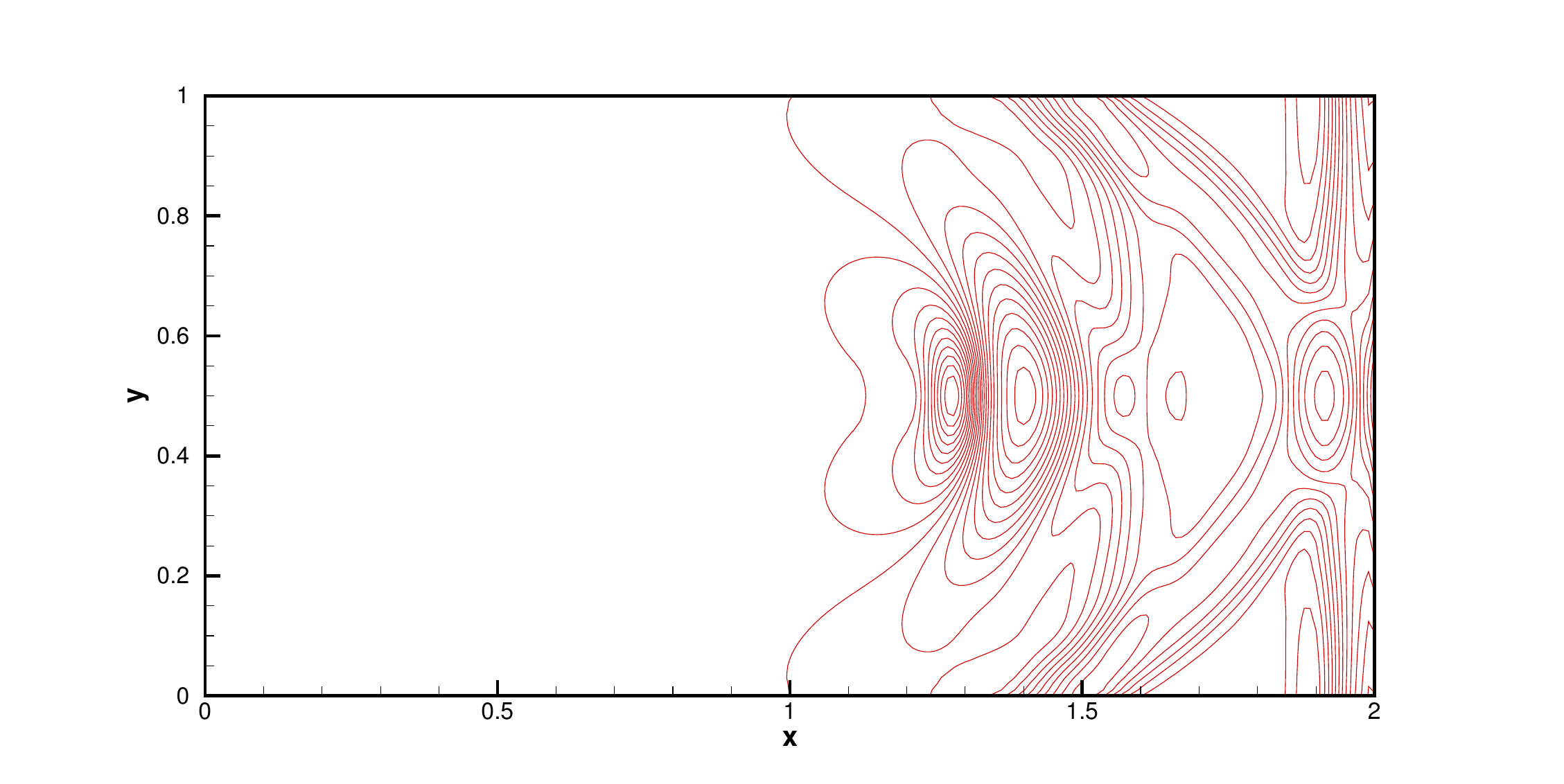}
\includegraphics[width=3.0in]{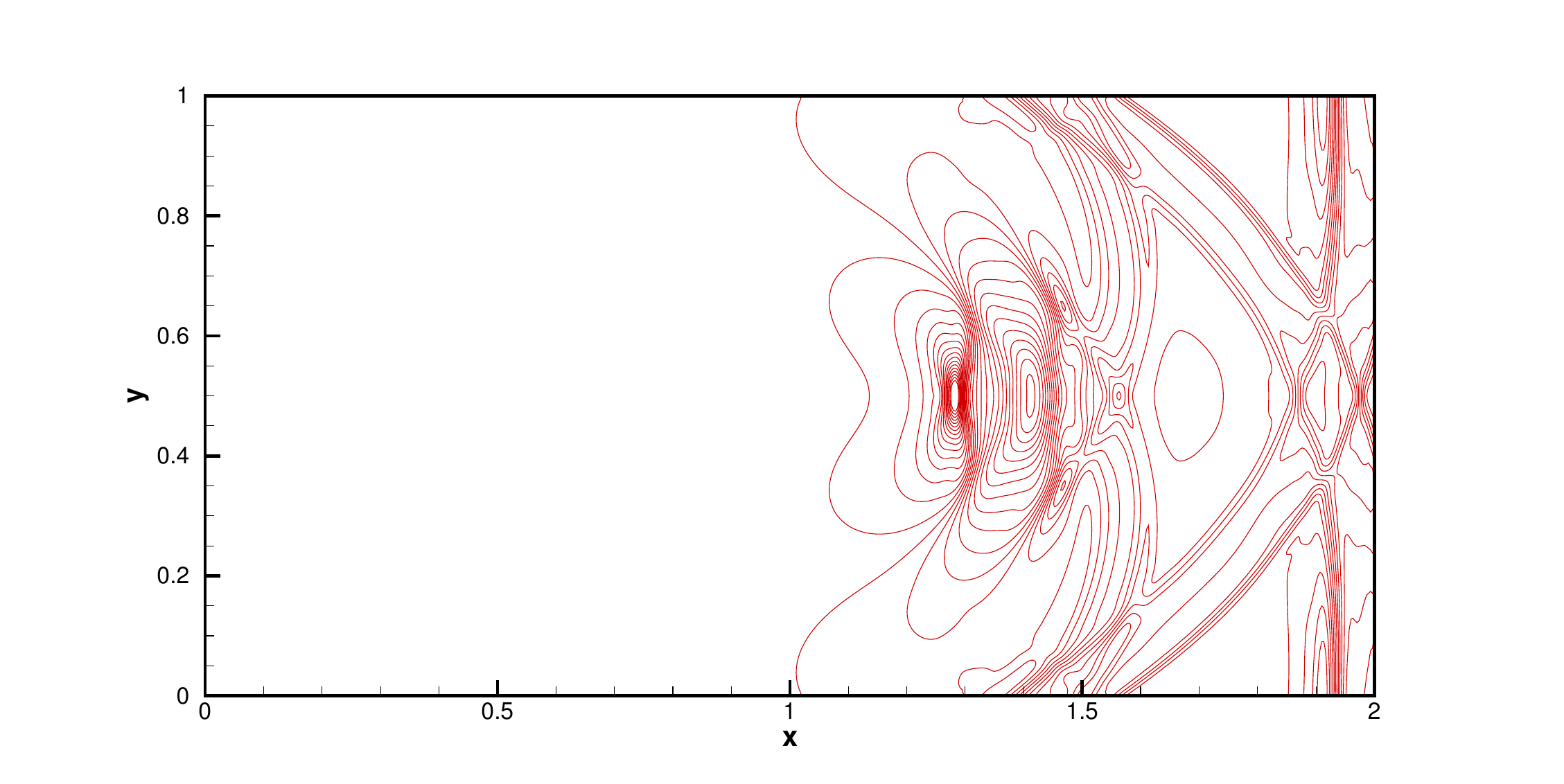}
\caption{A small perturbation of a two-dimensional steady state water flow.
Contours of water surface level h+b at different end time. From top to bottom:
at time $t = 0.12$ s from $0.999703$ to $1.00629$;
at time $t = 0.24$ s from $0.994836$ to $1.01604$;
at time $t = 0.36$ s from $0.988582$ to $1.0117$;
at time $t = 0.48$ s from $0.990344$ to $1.00497$; and
at time $t = 0.6$ s from $0.995065$ to $1.0056$.
Left: numerical results on a mesh with $200 \times 100$ uniform cells.
Right: numerical results on a mesh with $600 \times 300$ uniform cells.
 } \label{f:2D-perturbation}
\end{figure}

\end{document}